\newtheoremstyle{mystyle}
  {}
  {}
  {}
  {}
  {\bfseries}
  {}
  { }
  {\thmname{#1}\thmnumber{ #2}\thmnote{ (#3)}}
\numberwithin{equation}{section} 
\theoremstyle{mystyle}
\newtheorem{theorem}[equation]{Theorem}
\newtheorem{proposition}[equation]{Proposition}
\newtheorem{fact}[equation]{Fact}
\newtheorem{definition}[equation]{Definition}
\newtheorem{example}[equation]{Example}
\newtheorem{remark}[equation]{Remark}
\let\oldproof\proof
\definecolor{my-dark-gray}{gray}{0.13}
\renewcommand{\proof}{\color{my-dark-gray}\oldproof}
  \newcommand{\details}[1]{
    {\color{OliveGreen} #1} \\}
\tikzset{
every picture/.style={line width=0.8pt, >=stealth,
                       baseline=-3pt,label distance=-3pt},
emptynode/.style={circle,minimum size=0pt, inner sep=0pt, outer
sep=0},
dotnode/.style={fill=black,circle,minimum size=2.5pt, inner sep=1pt, outer
sep=0},
small_dotnode/.style={fill=black,circle,minimum size=2pt, inner sep=0pt, outer
sep=0},
morphism/.style={fill=white,circle,draw,thin, inner sep=1pt, minimum size=15pt,
                 scale=0.8},
small_morphism/.style={fill=white,circle,draw,thin,inner sep=1pt,
                       minimum size=10pt, scale=0.8},
ellipse_morphism/.style args={#1}{fill=white,circle,draw,thin,inner sep=1pt,
                       minimum size=5pt, scale=0.8,
												ellipse, draw, rotate=#1},
coupon/.style={draw,thin, inner sep=1pt, minimum size=18pt,scale=0.8},
semi_morphism/.style args={#1,#2}{
                  fill=white,semicircle,draw,thin, inner sep=1pt, scale=0.8,
                  shape border rotate=#1,
                  label={#1-90:#2}},
regular/.style={densely dashed}, 
edge/.style={very thick, draw=green, text=black},
overline/.style={preaction={draw,line width=2mm,white,-}},
thin_overline/.style={preaction={draw,line width=#1 mm,white,-}},
thin_overline/.default=2,
thick_overline/.style={preaction={draw,line width=3mm,white,-}},
really_thick/.style={line width=3mm, gray},
boundary/.style={thick,  draw=blue, text=black},
ribbon/.style={line width=1.5mm, postaction={draw,line width=1mm,white}},
ribbon_u/.style args={#1,#2}{line width=#1mm, postaction={draw,line width=#2mm,white}},
cell/.style={fill=black!10},
subgraph/.style={fill=black!30},
midarrow/.style={postaction={decorate},
                 decoration={
                    markings,
                    mark=at position #1 with {\arrow{>}},
                 }},
midarrow/.default=0.5,
midarrow_rev/.style={postaction={decorate},
                 decoration={
                    markings,
                    mark=at position #1 with {\arrow{<}},
                 }},
midarrow_rev/.default=0.5,
block/.style={rectangle, rounded corners, text centered, draw=black, align=center}
}
\tikzstyle{block} = [rectangle, rounded corners, text centered, draw=black, align=center]
\tikzset{yxplane/.style={canvas is xy plane at z=#1}}
\tikzset{>=latex}
\tikzset{->-/.style={decoration={
  markings,
  mark=at position .5 with {\arrow{>}}},postaction={decorate}}}
\tikzset{-<-/.style={decoration={
  markings,
  mark=at position .5 with {\arrow{<}}},postaction={decorate}}}
\title{Equivalence of field theories: Crane-Yetter and the shadow}
\author{Jin-Cheng Guu}
\date{}
\begin{document}

\maketitle
\begin{flushright}
  \tiny{Compiled Time: [\today\,\DTMcurrenttime]} \quad\qquad.
\end{flushright}

\abstract{

  It has been open for years to clarify the relationship between
  two smooth $4$-manifolds invariants, the shadow model
  (motivated by statistical mechanics
  \cite{turaev/topology-of-shadows}) and the simplicial
  Crane-Yetter model (motivated by topological quantum field
  theory
  \cite{crane-yetter/a-categorical-construction-of-4d-tqft}),
  both of which degenerate to the $3$D Witten-Reshetikhin-Turaev
  model in a special case. Despite the seeming difference in
  their origins and formal constructions, we show that they are
  in fact equal.

  Along the way, we sketch a dictionary between the shadow model
  and the Crane-Yetter model, provide a brief survey to the
  shadow construction a la Turaev, and suggest once again that
  the semisimple models have reached their limits.

}
\tableofcontents

\section*{Acknowledgement}

The author would like to thank fruitful discussions with Oleg
Viro, Alexander Kirillov, Vladimir Turaev, Shamuel Auyeung, and
Jiahao Hu. The main result of this paper was conjectured by O.
Viro two decades ago; to paraphrase, ``We have two combinatorial
invariants for 4-manifolds. It would be a miracle if they are
different.'' V. Turaev, as a leading expert of this field,
confirmed that it was still open in mid $2021$. Without their
sharing, the author would not have worked on this problem.

\section{Introduction}

Topology is the wildest in dimension $4$. For example, the smooth
Poincare conjecture remains far from proven only for $n = 4$, and
the topological $\mathbb{R}^{n}$ admits exactly one
diffeomorphism type unless $n=4$, in which case uncountably many
are available. There are gauge-theoretic tools which, to some
extent, are sensitive to exotic smooth phenomena, such as the
Donaldson and Seiberg-Witten invariants. Despite their successes,
they are unable to tackle a large class of problems including the
smooth Poincare conjecture for $n=4$.

In the $90$s, a simpler invariant of smooth $4$-manifolds was
proposed by Crane and Yetter (CY). The original CY invariant
could only detect homotopy type but its simplicity leaves room
for modifications. Despite several attempts at modification (e.g.
\cite{barenz/evaluation-crane-yetter}), to date, there has not been
much success at detecting exotic smooth phenomena. A recent work
by Reutter \cite{reutter/semisimple} explains the failure, and
suggests the need for a non-semisimple or derived variant of the
CY model.

Before moving into that direction, the author aims to settle
another issue first. There is another invariant of
$4$-dimensional smooth manifolds, the shadow model a la V. Turaev
\cite{turaev/topology-of-shadows} \cite{turaev-qiok-3-manifolds},
from statistical mechanics. Moreover, it was known that the
shadow model coincides with the CY model when both degenerate
\cite[X.3.2 \& theorem X.3.3]{turaev-qiok-3-manifolds}
\cite{barrett/observables-in-tv-and-cy} to the $3$D
Witten-Reshetikhin-Turaev model (also known as the quantized
Chern-Simons theory). It is thus necessary to clarify their
relationship in the general semisimple case. Despite the
difference of their origins and formal definitions, this paper
shows them equal, suggesting once again that semisimple models
have reached their limit in terms of detecting exotic smooth
phenomena.

Along proving the equivalence of the two models, we make heavy
use of the construction of the shadow model given in
\cite{turaev-qiok-3-manifolds}. We include the essential details
of the construction in this paper which serve as a digestible
survey of the shadow model.

\subsection{Sections summary}

\begin{itemize}
  \item Section \ref{section/algebra}: We provide the basics of tensor
        categories, their graphical calculi and tensor networks,
        and special numerical entities ($n$j-symbols). Nothing in
        this section is new.
  \item Section \ref{section/topology}: We provide three kinds of
        data that present smooth $4$-manifolds: triangulations,
        handle decompositions, and shadows. We also fully recall
        the definition of a shadow, which closely resembles foams
        in the modern literature on Khovanov homology. Nothing in
        this section is new. Readers can treat this paper as a
        thin interface to the book
        \cite{turaev-qiok-3-manifolds}.
  \item Section \ref{section/sum}: We provide the definitions of
        the two state sums: the CY model and the shadow model.
        The novel observation is that the shadow model can be
        extended from modular categories to premodular
        categories. We conclude the section by stating and
        proving the equivalence.
\end{itemize}

\subsection{Note on the arXiv version}
The tex source file of this paper includes hidden details, which
can be displayed by recompiling with toggling \texttt{details} in
the source.

\subsection{A summary to experts}\label{subsection/a-summary-to-experts}

The idea of the proof for the equivalence is simple. Let $X$ be a
$4$-manifold. While the CY state sum can be computed from any
triangulation $T$ of $X$, the shadow state sum can be computed
from any stable shadow of $X$. We construct a natural stable
shadow $S$ from $T$ and compute the shadow sum in terms of $S$.
If the shadow sum is actually a $(3+1)$D-TQFT, we can reduce the
task to proving that the local shadow evaluates to the local term
involved in the CY state sum (namely, the $10$j-symbols).
However, the author could not prove the semi-locality, but rather
found a workaround for the case of closed $4$-manifolds. The
author expects the shadow sum can be modified to be a fully
extended TQFT, which is in turn fully equivalent to the
Crane-Yetter model.

\subsection{Conventions}

\noindent Some conventions we use globally in the paper:

\begin{itemize}
  \item We fix a algebraic closed field $\mathbb{k}$ of
        characteristic $0$.
  \item By a vector space $V$ we mean a finite dimensional vector
        space over the field $\mathbb{k}$, unless further
        specified. The linear dual
        $Hom_{\mathbb{k}}(V,\mathbb{k})$ is denoted by
        $V^{\star}$.
  \item By a manifold we mean a piecewise-linear, oriented,
        connected and closed manifold in real dimension $4$
        unless further specified.
  \item In this paper, by a monoidal category we mean a strict
        monoidal category (we do not lose any generality by Mac
        Lane's strictness theorem). \details{For the statement
        and a full modern proof, see \cite[theorem
        2.8.5]{egno/tensor-cats}.}
\end{itemize}

\section{Algebra $(A)$}\label{section/algebra}
\subsection{Premodular category}

The full definition of a premodular category from scratch is
tedious. Unfamiliar readers can think of a premodular category
roughly as a higher version of the group algebra of a finite
group. A formal definition can be found after some motivations
(\ref{def/premodular-category}).

Algebraic objects help abstract details in various mathematical
problems. However, they sometimes abstract too much to recover
information of interest. In recent years, mathematicians
``categorify'' algebraic objects in order to retain more
information. For example, a ring is categorified to a tensor
category, and a tensor category with special properties and
additional structures could be powerful. For example, a ribbon
fusion category provide quantum invariants of knots that
generalize the Jones polynomials. A premodular category is a
braided fusion category satisfying with a spherical structure.
Examples include the (modified) category of representations of
finite groups, finite $2$-groups, and quantum groups.

\begin{definition}[premodular category]\label{def/premodular-category}
  A premodular category is a spherical braided fusion category.
\end{definition}

\noindent In particular, a premodular category $C$ is semisimple,
$\mathbb{k}$-linear, and fusion. Define its set of simple objects to be
the set $I$ of simple $C$-objects up to isomorphism. Denote
$0 \in I$ so that the monoidal identity $\mathbb{1} \in 0$. As
taking monoidal dual preserves simplicity, for each $i \in I$
there is a unique element $i^{\star}$ in $I$ such that
$V_{i}^{\star} \in i^{\star}$ whenever $V_{i} \in i$. So $I$ is a
finite set with an involution $I \xrightarrow{\star} I$. Using
the spherical structure, we can define for each $i \in I$ the
number $dim_{C}(i) = dim(i) \in \mathbb{k}$ as the trace of
$id_{V_{i}}$ and the number $\nu_{i} \in \mathbb{k}^{\star}$ as
the twisting coefficient $tr(\theta_{V_{i}})/tr(id_{V_{i}})$,
where $\theta_{V_{i}}$ denotes the endomorphism of $V_{i}$
depicted in the following graph.
\begin{center}
  \includegraphics[height=0.8cm]{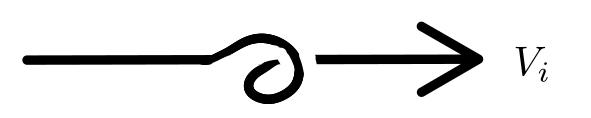}
\end{center}
We further define the Gauss sum of $C$ to be
$$\Delta_{C} = \sum_{i \in I} \nu_{i}^{-1}dim(i)^{2}.$$

\noindent In order to do computations with a premodular category
we need to choose and fix some extra data (called a coordinate).
All intrinsic results are independent of the choice (except the
square root $D$ of the global dimension).

\begin{definition}[coordinated premodular
  category]\label{def/coordinated-premodular-category}
  Let $C$ be a premodular category and $I$ its set of simple objects.
  Choose and fix the following:
  \begin{itemize}
    \item A number $D \in \mathbb{k}$ such that
          $D^{2} = \sum_{i \in I} dim_{C}(i)^{2}$ (the global
          dimension of $C$).
    \item A set of $C$-objects $\{V_{i}\}_{i \in I}$ such
          that $V_{i} \in i$ and that $V_{0} = \mathbb{1}$.
    \item A set of isomorphisms
          \cite[p.313]{turaev-qiok-3-manifolds}
          $\{\omega_{i}: V_{i} \to (V_{i^{\star}})^{\star}\}_{i \in I}$.
    \item A set of numbers
          $\{dim_{C}'(i) = dim'(i) \in \mathbb{k}\}_{i \in I}$
          such that $dim_{C}'(0) = 1$,
          $dim_{C}'(i)^{2} = dim_{C}(i)$, and
          $dim_{C}'(i^{\star}) = dim_{C}'(i)$.
    \item A set of numbers
          $\{\nu_{i}' \in \mathbb{k}\}_{i \in I}$ such that
          $\nu_{0}' = 1$, $(\nu_{i}')^{2} = \nu_{i}$, and
          $\nu_{i^{\star}}' = \nu_{i}'$
          \cite[p.313]{turaev-qiok-3-manifolds}.
  \end{itemize}

  Such a $5$-tuple
  $\vec{d} = (D, \{V_{i}\}, \{\omega_{i}\}, \{dim'(i)\}, \{\nu'_{i}\})$
  is called a coordinate of the premodular category $C$. Such a
  pair $(C, \vec{d})$ is called a coordinated premodular
  category.
\end{definition}

\noindent We will often confuse a premodular category with a
coordinated premodular category.

\begin{definition}[multiplicity module]\label{def/multiplicity-module}
  Let $C$ be a coordinated premodular category and $I$ its set
  of simple objects. Respectively, define $H^{ijk}$, $H_{k}^{ij}$, and
  $H_{ij}^{k}$ to be the $\mathbb{k}$-modules
  $Hom_{C}(\mathbb{1}, V_{i} \otimes V_{j} \otimes V_{k})$,
  $Hom_{C}(V_{k}, V_{i} \otimes V_{j})$, and
  $Hom_{C}(V_{i} \otimes V_{j}, V_{k})$.
\end{definition}

\noindent We identify $H_{k}^{ij}$ with $H^{ijk^{\star}}$ and
$H_{ij}^{k}$ with $H^{kj^{\star}i^{\star}}$ by the linear maps
induced by the following graph and call them the canonical
identifications:
\begin{center}
  \includegraphics[height=5.5cm]{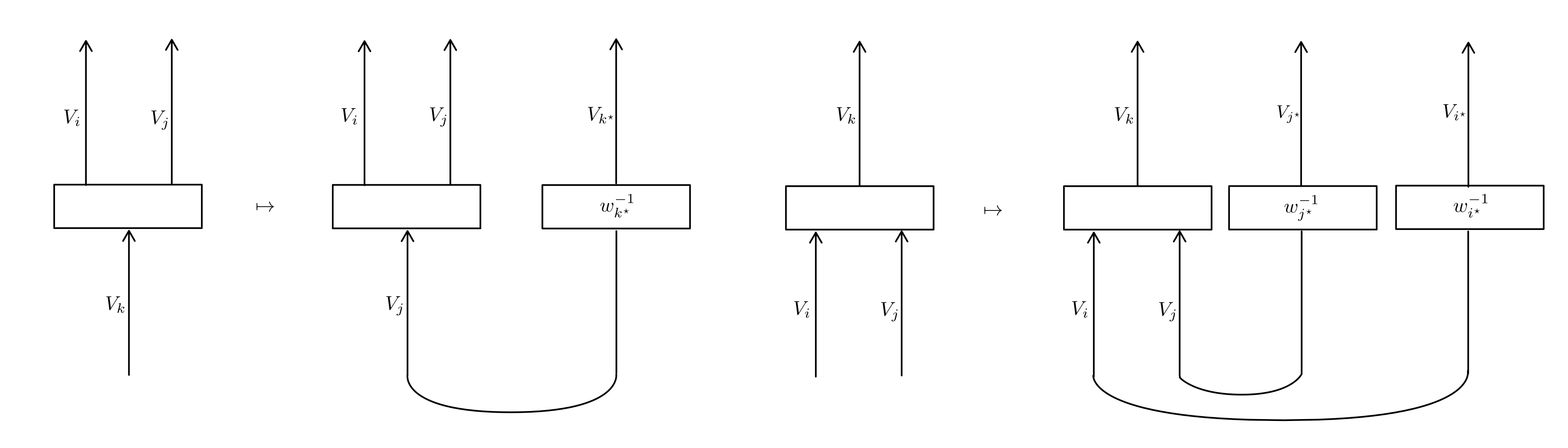}
\end{center}
\noindent Recall that the natural pairing
$$H^{ij}_{k} \otimes_{\mathbb{k}} H_{ij}^{k} \to Hom_{C}(V_{k},V_{k}) \xrightarrow{tr} \mathbb{k})$$
is nondegenerate by the semisimplicity of $C$. The braided
structure of $C$ guarantees that the $\mathbb{k}$-modules
$H^{ijk}$, $H^{ikj}$, $H^{jik}$, $H^{jki}$, $H^{kij}$, $H^{kji}$
are all isomorphic. In category theory, we must carefully
distinguish equalities from isomorphicities, hence we introduce a
way to keep track of the isomorphisms among the $H^{ijk}$'s.

\begin{definition}[canonical isomorphisms]\label{def/canonical-isomorphism}
  Let $C$ be a premodular category, $c$ its braided structure,
  $I$ its set of simple objects, and $i,j,k \in I$. Define the
  canonical isomorphisms
  $H^{ijk} \xrightarrow{\sigma_{1}(ijk)} H^{jik}$ and
  $H^{ijk} \xrightarrow{\sigma_{2}(ijk)} H^{ikj}$ by
  $$\sigma_{1}(ijk): \phi \mapsto \nu_{i}'\nu_{j}'(\nu_{k}')^{-1}(c_{V_{i}, V_{j}} \otimes id_{V_{k}})\phi,$$
  $$\sigma_{2}(ijk): \phi \mapsto \nu_{j}'\nu_{k}'(\nu_{i}')^{-1}(id_{V_{i}} \otimes c_{V_{j}, V_{k}})\phi.$$
\end{definition}

\noindent It is a simple exercise in the theory of tensor
categories to check that
\begin{equation} \label{eq1}
  \begin{split}
    \sigma_{1}(jik)\sigma_{1}(ijk) & = id, \\
    \sigma_{2}(ikj)\sigma_{2}(ijk) & = id, \\
    \sigma_{1}(jki)\sigma_{2}(jik)\sigma_{1}(ijk) & = \sigma_{2}(kij)\sigma_{1}(ikj)\sigma_{2}(ijk)
  \end{split}
\end{equation}
so $\sigma_{1}$ and $\sigma_{2}$ specify the isomorphisms among
the six $\mathbb{k}$-modules.

\begin{definition}[symmetrized multiplicity module]\label{def/symmetrized-multiplicity-module}
  Let $C$ be a premodular category, $I$ its set of simple objects, and
  $i, j, k \in I$. Define the symmetrized multiplicity module
  $H(i,j,k)$ to be the $\mathbb{k}$-module consisting of
  functions $\phi$ that assign an element
  $\phi^{i_{1}i_{2}i_{3}} \in H^{i_{1}i_{2}i_{3}}$ to each
  ordering $(i_{1}, i_{2}, i_{3})$ of the set $\{i, j, k\}$.
\end{definition}

\noindent The point is that all the symmetrized modules
$H(i,j,k)$, $H(i,k,j)$, $H(j,i,k)$, $H(j,k,i)$, $H(k,i,j)$,
$H(k,j,i)$ are \textit{equal as sets}. By definition, there is a
canonical identification between $H(i,j,k)$ and $H^{ijk}$.

\begin{definition}[contraction]\label{def/contraction}
  Let $C$ be a coordinated premodular category, $I$ its set of
  simple objects, and $i, j, k \in I$. Define the contraction map
  $H^{ijk} \otimes H^{k^{\star}j^{\star}i^{\star}} \to \mathbb{k}$
  by the following diagram \details{\cite[figure
    VI.3.5]{turaev-qiok-3-manifolds}}
  \begin{center}
    \includegraphics[height=6cm]{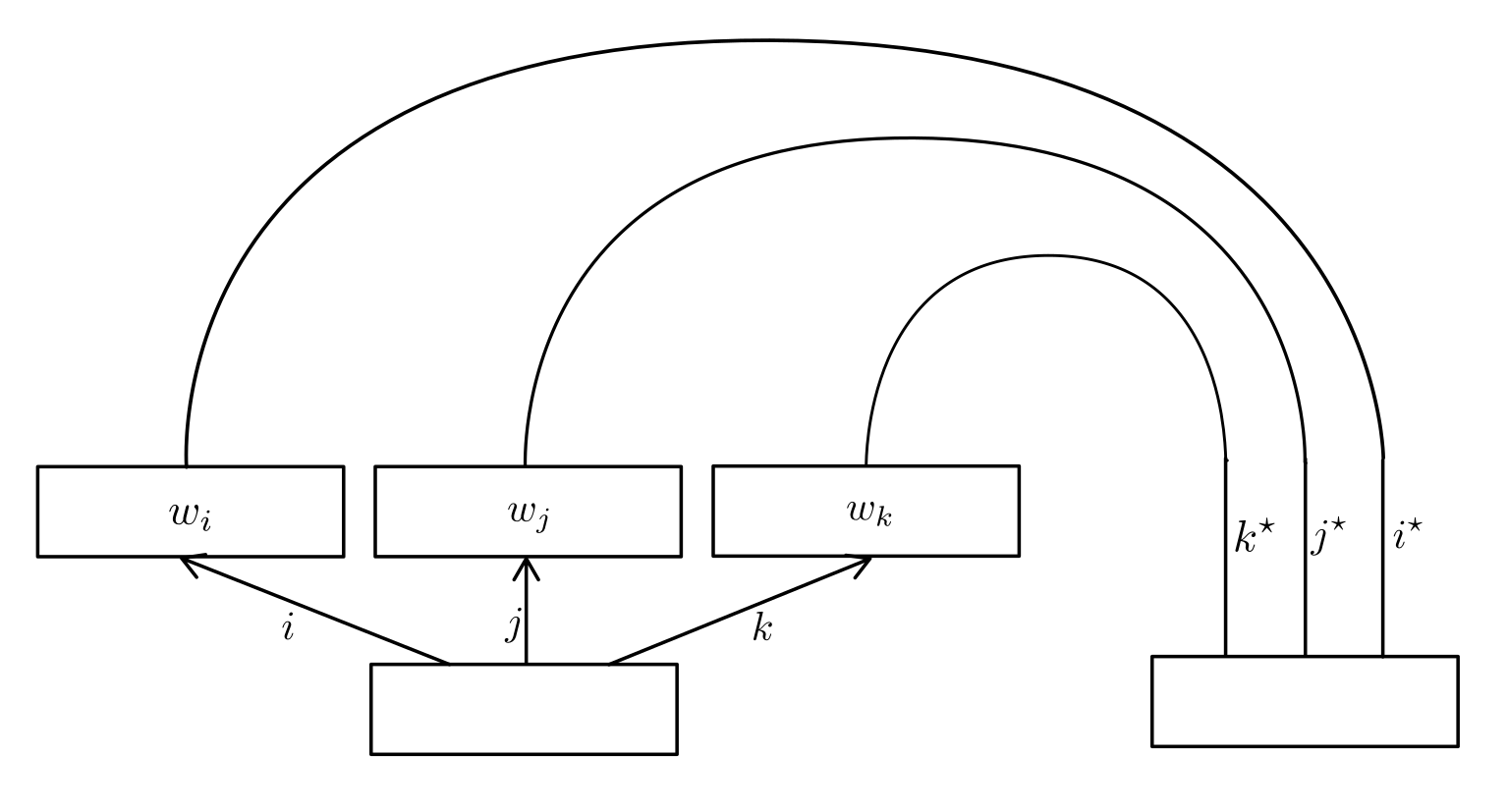}
  \end{center}
  Denote the canonically induced contraction map on the
  symmetrized modules to be
  (\cite[p.334]{turaev-qiok-3-manifolds})
  $$\ast_{ijk}: H(i,j,k) \otimes_{\mathbb{k}} H(i^{\star}, j^{\star}, k^{\star}) \to \mathbb{k}.$$
  This defines a nondegenerate pairing and thus induces a
  canonical element $Id(i,j,k)$ in the domain of $\ast_{ijk}$
  (\cite[p.333]{turaev-qiok-3-manifolds}).
\end{definition}

\noindent We will abuse notation by denoting natural contractions
from the non-ordered tensor products
$V \otimes_{\mathbb{k}} H(i,j,k) \otimes_{\mathbb{k}} H(i^{\star}, j^{\star}, k^{\star})$
to $\mathbb{k}$ by $\ast_{ijk}$ for any $\mathbb{k}$-module $V$.

\subsection{$6$j-symbol, $10$j-symbol, and $15$j-symbol}

\newcommand{\sixJSymbol}[6]{\begin{bmatrix}
  #1 & #2 & #3 \\
  #4 & #5 & #6 \\
\end{bmatrix}}

\newcommand{\normalizedSixJSymbol}[6]{\begin{vmatrix}
  #1 & #2 & #3 \\
  #4 & #5 & #6 \\
\end{vmatrix}}

\newcommand{\tenJSymbol}[8]{\begin{vmatrix}
    #1 & #2 & #3 & #4 \\
     . & #5 & #6 & #7 \\
    \tenJSymbolContinued #8 \\
  \end{vmatrix}_{10j}
}

\newcommand\tenJSymbolContinued[3]{
  . & . & #1 & #2 \\
  . & . & . & #3
}

\newcommand{\tenJSymbolMirrored}[8]{\begin{vmatrix}
    #1 & #2 & #3 & #4 \\
     . & #5 & #6 & #7 \\
    \tenJSymbolMirroredContinued #8 \\
  \end{vmatrix}_{\overline{10j}}
}

\newcommand\tenJSymbolMirroredContinued[3]{
  . & . & #1 & #2 \\
  . & . & . & #3
}

\begin{definition}[$6$j-symbol]\label{def/6j-symbol}
  For each $i,j,k,l,m,n \in I$, we define the $6$j-symbol
  $$\sixJSymbol{i}{j}{k}{l}{m}{n}:
  H_{k}^{ij} \otimes H_{m}^{kl} \otimes H_{jl}^{n} \otimes H_{in}^{m} \to \mathbb{k}$$
  to be the linear map induced by the partial tensor network on
  the $2$-sphere $S^{2}$:
  \begin{center}
    \includegraphics[height=6cm]{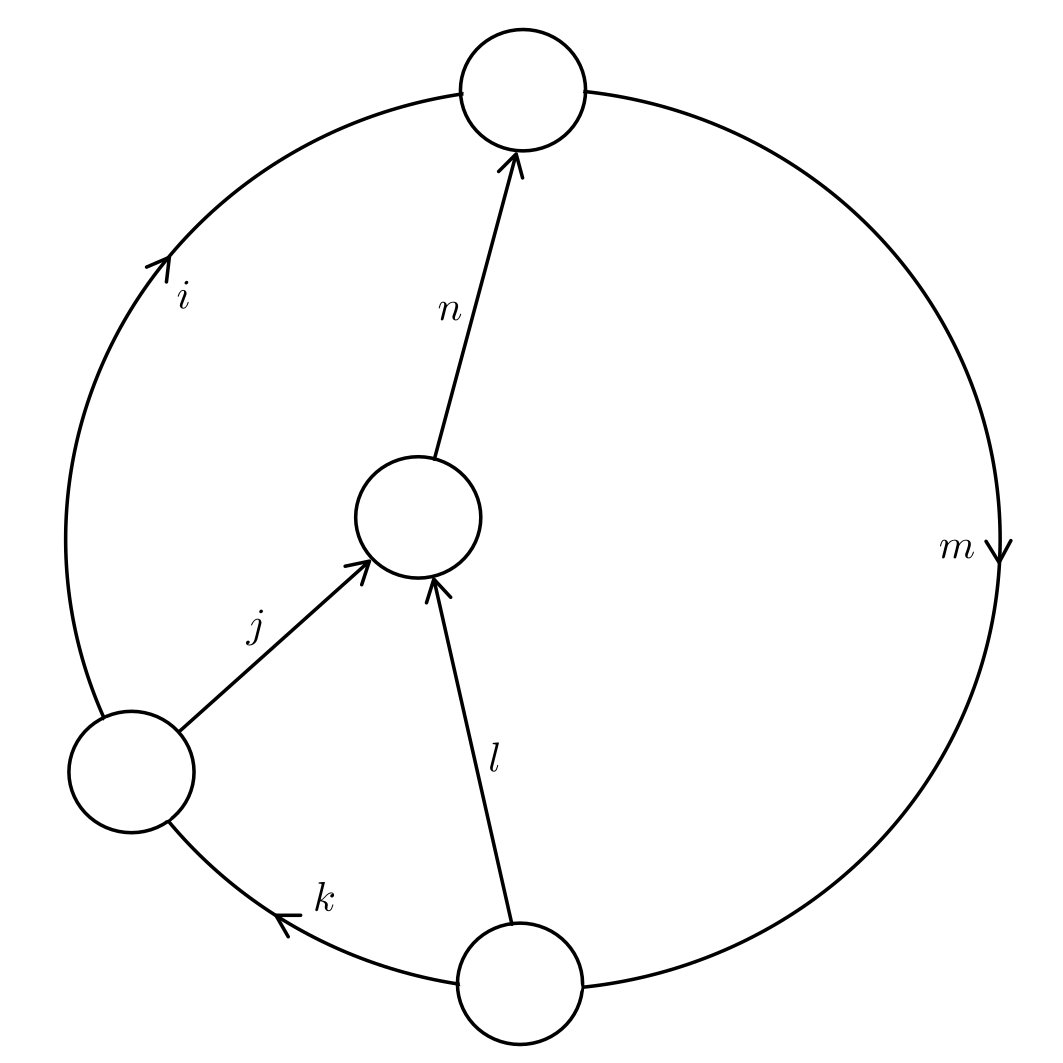}
  \end{center}

  \noindent Using the canonical identifications, we define the
  induced map
  $$\normalizedSixJSymbol{i}{j}{k}{l}{m}{n}: H(i,j,k^{\star}) \otimes H(k,l,m^{\star}) \otimes H(n,l^{\star},j^{\star}) \otimes H(m,n^{\star},i^{\star}) \to \mathbb{k}$$
  to be the normalized $6$j-symbol.
\end{definition}

\begin{proposition}[basic equalities of $6$j symbols]\label{prop/basic-equalities-of-6j-symbols}
  Let $C$ be a coordinated premodular category, $I$ its set of
  simple objects, $i, j, k, k', l, m \in I$,
  $j_{0}, j_{1}, \ldots, j_{8} \in I$, and $\delta$ be the Kronecker
  delta. Then we have the degenerated $6$j symbol
  \begin{equation}\label{eqn/degenerated-6j}
    \normalizedSixJSymbol{i}{j}{k}{l}{m}{0} = \delta_{m,i} \delta_{l,j^{\star}}dim'(i)^{-1}dim'(j)^{-1}Id(i,j,k^{\star})
    \in H(i,j,k^{\star}) \otimes_{\mathbb{k}} H(i^{\star}, j^{\star}, k)
    .
  \end{equation}
  We also have the so called Biedenharn-Elliott identity as an
  equality in the non-ordered tensor product of the
  $\mathbb{k}$-modules
  $$
  H(j_{3}^{\star}, j_{5}^{\star}, j_{6}) \otimes
  H(j_{1}^{\star}, j_{2}^{\star}, j_{5}) \otimes
  H(j_{4}^{\star}, j_{6}^{\star}, j_{0}) \otimes
  H(j_{0}^{\star}, j_{1}, j_{7}) \otimes
  H(j_{7}^{\star}, j_{2}, j_{8}) \otimes
  H(j_{8}^{\star}, j_{3}, j_{4})
  $$
  (in the context of state sum over a triangulation, this
  corresponds to the Pachner $(2,3)$-move):
  \begin{equation}\label{eqn/Biedenharn-Elliot-identity}
    \ast_{j_{0}^{\star}j_{5}j_{8}}
    \left(
      \normalizedSixJSymbol{j_{5}}{j_{3}}{j_{6}}{j_{4}}{j_{0}}{j_{8}} \otimes \normalizedSixJSymbol{j_{1}}{j_{2}}{j_{5}}{j_{8}}{j_{0}}{j_{7}}
    \right)
    =
    \sum_{j \in I} dim(j)
    \ast_{j^{\star}j_{2}j_{3}}\ast_{jj_{4}j_{7}^{\star}}\ast_{jj_{1}j_{6}^{\star}}
    \left(
      \normalizedSixJSymbol{j_{1}}{j_{2}}{j_{5}}{j_{3}}{j_{6}}{j} \otimes
      \normalizedSixJSymbol{j_{1}}{j}{j_{6}}{j_{4}}{j_{0}}{j_{7}} \otimes
      \normalizedSixJSymbol{j_{2}}{j_{3}}{j}{j_{4}}{j_{7}}{j_{8}}
    \right).
  \end{equation}
  We also have the orthonormality relation
  \begin{equation}\label{eqn/orthonormality-relation}
    \delta_{k,k'} Id(i,j,k^{\star}) \otimes Id(k,l,m^{\star})
    =
    dim(k)\,\sum_{n \in I} dim(n) \ast_{im^{\star}n} \ast_{jln^{\star}}
    \left(
      \normalizedSixJSymbol{i^{\star}}{j^{\star}}{k^{\star}}{l^{\star}}{m^{\star}}{n^{\star}} \otimes
      \normalizedSixJSymbol{i}{j}{k'}{l}{m}{n}
    \right).
  \end{equation}
  Finally, we have the Racah identity
  \begin{equation}\label{eqn/Racah-identity}
    \nu'_{j_{3}} \nu'_{j_{6}} (\nu'_{j_{1}} \nu'_{j_{2}} \nu'_{j_{4}} \nu'_{j_{5}})^{-1}
    \normalizedSixJSymbol{j_{1}}{j_{2}}{j_{3}}{j_{4}}{j_{5}}{j_{6}}
    =
    \sum_{j \in I} (\nu'_{j})^{-1} dim(j) \ast_{j^{\star}j_{1}j_{4}} \ast_{jj_{2}j_{5}^{\star}}
    \left(
      \normalizedSixJSymbol{j_{1}}{j_{4}}{j}{j_{2}}{j_{5}}{j_{6}} \otimes
      \normalizedSixJSymbol{j_{2}}{j_{1}}{j_{3}}{j_{4}}{j_{5}}{j}
    \right)
  \end{equation}
\end{proposition}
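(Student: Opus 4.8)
The plan is to deduce all four identities from the corresponding statements in Turaev's book \cite[Chapter~VI]{turaev-qiok-3-manifolds}, where they are proved for \emph{modular} categories, after checking that modularity is never actually used — so that the arguments go through verbatim for an arbitrary coordinated premodular category. Concretely, I would treat each identity by (i) recalling the tensor-network manipulation that produces it, and (ii) noting that its only ingredients are: semisimplicity of $C$, i.e. nondegeneracy of the contraction pairing recorded after Definition \ref{def/multiplicity-module} and in Definition \ref{def/contraction}; the spherical structure; the coordinate data $(D,\{V_i\},\{\omega_i\},\{dim'(i)\},\{\nu'_i\})$ of Definition \ref{def/coordinated-premodular-category}; and, for the Racah identity alone, the braiding $c$. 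At no point does one invoke nondegeneracy of the $S$-matrix, which is exactly the extra condition distinguishing a modular category from a premodular one; this is why the $6$j-calculus, like the shadow state sum of the next section, already makes sense premodularly.

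For \eqref{eqn/degenerated-6j} I would substitute $n=0$ into the partial tensor network on $S^2$ defining $\sixJSymbol{i}{j}{k}{l}{m}{0}$. The edge colored by $V_0=\mathbb{1}$ may be erased: its endpoint coupons lie in $H_{jl}^{0}\cong Hom_C(V_j\otimes V_l,\mathbb{1})$ and $H_{i0}^{m}\cong Hom_C(V_i,V_m)$, which vanish unless $l=j^{\star}$ and $m=i$ (whence $\delta_{m,i}\delta_{l,j^{\star}}$) and are one-dimensional when nonzero, so under the canonical identifications the surviving network is precisely the contraction diagram of Definition \ref{def/contraction} for the triple $(i,j,k^{\star})$; passing to the normalized symbol $\normalizedSixJSymbol{i}{j}{k}{l}{m}{0}$ contributes the scalar $dim'(i)^{-1}dim'(j)^{-1}$, and by the definition of $Id(i,j,k^{\star})$ this gives \eqref{eqn/degenerated-6j}. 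For the Biedenharn--Elliott identity \eqref{eqn/Biedenharn-Elliot-identity}, both sides are the evaluation of one and the same network on $S^2$: up to isotopy the left diagram (two $6$j-tetrahedra glued along a triangle) and the right diagram (three such tetrahedra, the internal edge resolved by the semisimple completeness relation $id_{V_a\otimes V_b}=\sum_{c}\sum_\alpha(\cdots)$ and weighted by $dim(j)$ in Turaev's normalization) present the two triangulations of a $3$-ball exchanged by the Pachner $(2,3)$-move, with the $\ast$'s encoding the gluings; this is the Mac Lane pentagon written in a fusion basis. The orthonormality relation \eqref{eqn/orthonormality-relation} comes out the same way: glue two $6$j-tetrahedra along a common triangle into a ``bubble'' which the spherical structure evaluates to $dim(k)^{-1}\delta_{k,k'}$ times an identity coupon; clearing $dim(k)$ and the bubble weights $dim(n)$, and using nondegeneracy of $\ast$ to read $Id(i,j,k^{\star})$ as a resolution of the identity, yields the claim. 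None of this uses the braiding.

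Only the Racah identity \eqref{eqn/Racah-identity} involves the braiding. Its content is the compatibility of $c$ with the associativity constraint — the hexagon axiom — rewritten in a fusion basis: the occurrences of $c$ there are exactly the operators $\sigma_1,\sigma_2$ of Definition \ref{def/canonical-isomorphism}, which by construction carry the twist factors $\nu'$. The left-hand prefactor $\nu'_{j_3}\nu'_{j_6}(\nu'_{j_1}\nu'_{j_2}\nu'_{j_4}\nu'_{j_5})^{-1}$ and the right-hand weight $(\nu'_{j})^{-1}dim(j)$ are precisely the twist corrections needed to make both sides land in the same unordered symmetrized tensor product of Definition \ref{def/symmetrized-multiplicity-module}. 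The argument uses the braided/ribbon fusion axioms and the coordinate data, but again not modularity; it is \cite[Chapter~VI]{turaev-qiok-3-manifolds}.

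The conceptual inputs — pentagon, hexagon, and the bubble/$\theta$-normalization on $S^2$ — are routine; I expect the real labor, and the only place errors are likely, to be the bookkeeping: tracking the coordinate constants $dim'(i)$ and $\nu'_i$, the canonical identifications of Definition \ref{def/multiplicity-module}, and the symmetrized-module conventions of Definition \ref{def/symmetrized-multiplicity-module} carefully enough that every line is an honest equality of elements rather than an equality ``up to a canonical isomorphism'', together with the easy but obligatory check that no step in the cited proofs of \cite{turaev-qiok-3-manifolds} secretly uses nondegeneracy of the $S$-matrix.
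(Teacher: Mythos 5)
Your approach is the same as the paper's, which simply cites \cite[section VI.5.4]{turaev-qiok-3-manifolds} and observes that the proofs there never invoke modularity, so they carry over verbatim to the premodular case. Your additional sketch of the tensor-network content of each identity is consistent with Turaev's arguments and with the paper's conventions, but it is elaboration rather than a different route.
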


\begin{proof}
  Proofs and references for the modular case can be found in
  \cite[section VI.5.4]{turaev-qiok-3-manifolds}. The proof does
  not use modularity at all, so it carries through for the
  premodular case verbatim.
\end{proof}

\begin{definition}[$10$j-symbol]\label{def/10j-symbol}
  Let $C$ be a coordinated premodular category, $I$ its set of
  simple objects, and $j_{ab} \in I$ with $j_{ab} = j_{ba}^{\star}$ for
  $0 \leq a, b \leq 4$.
  Denote $[x,y,z,w]$ to be the vector space
  $Hom_{C}(V_{0}, V_{j_{x}} \otimes V_{j_{y}} \otimes V_{j_{z}} \otimes V_{j_{w}})$.
  Then define the $10$j symbol (and its mirror, resp.)
  $$
  \tenJSymbol{j_{01}}{j_{02}}{j_{03}}{j_{04}}{j_{12}}{j_{13}}{j_{14}}{{j_{23}}{j_{24}}{j_{34}}} \quad \left(\tenJSymbolMirrored{j_{01}}{j_{02}}{j_{03}}{j_{04}}{j_{12}}{j_{13}}{j_{14}}{{j_{23}}{j_{24}}{j_{34}}}, resp.\right)
  $$
  to be the $\mathbb{k}$-linear map from the non-ordered tensor
  product of $\mathbb{k}$-modules
  $$
  [01,02,03,04] \otimes
  [12,13,14,10] \otimes
  [23,24,20,21] \otimes
  [34,30,31,32] \otimes
  [40,41,42,43]
  $$
  to $\mathbb{k}$ induced by the following (equivalent)
  $C$-colored graphs (resp., the same gadget but with the
  underlying graph mirrored and all arrows reversed).
  \begin{center}\label{graph/10j-symbol}
    \includegraphics[height=4cm]{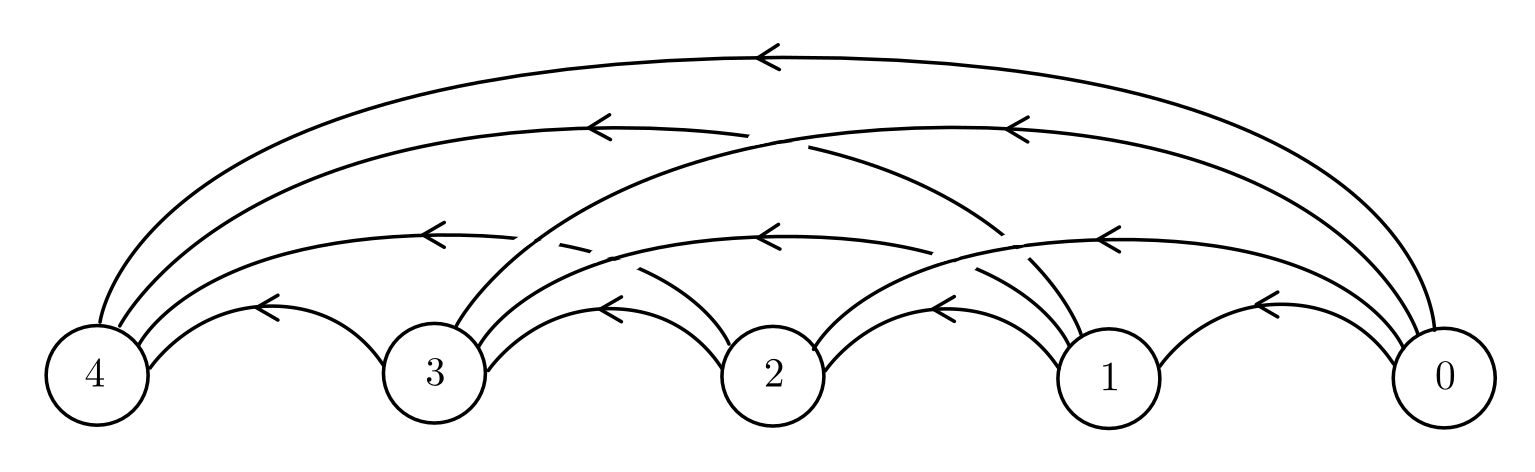}
    \includegraphics[height=6cm]{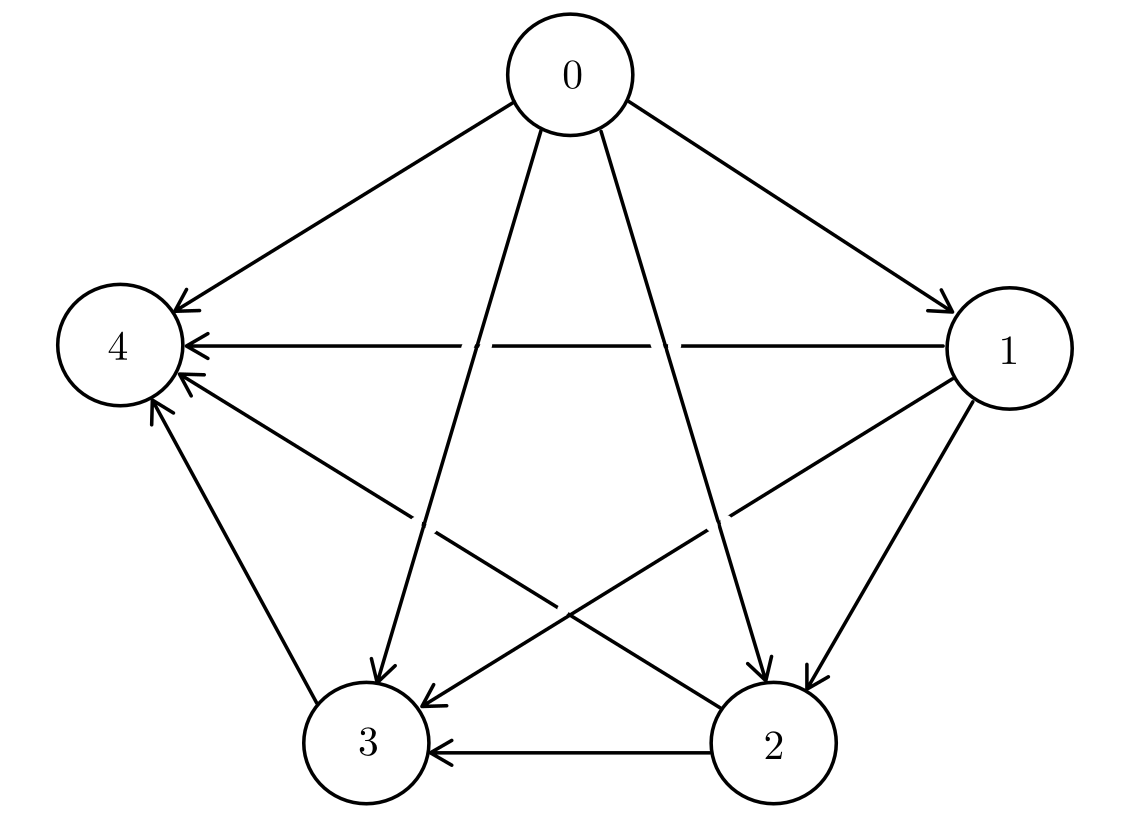}
  \end{center}
\end{definition}

\begin{remark}[$15$j-symbol]\label{remark/15j-symbol}
  A $15$j-symbol is an equivalent variant of a $10$j-symbol. It
  was used in the older literature to make sure the morphism
  spaces are $1$-dimensional. The $10$j-symbols are more
  intrinsic, so we use them instead of the $15$j-symbols.
\end{remark}

\section{Topology $(T)$}\label{section/topology}
\subsection{$4$-manifold}

Manifolds in real dimension $4$ are interesting because of their
wildness, witnessed in the following examples:

\begin{enumerate}
  \item Real dimension $4$ is the smallest dimension where the
        topological structures and the smooth structures
        disagree.
  \item For $n \in \mathbb{N}\setminus\{4\}$, the euclidean space
        $\mathbb{R}^{n}$ as a topological space admits exactly
        one diffeomorphism type, while $\mathbb{R}^{4}$ admits
        infinitely many
        \cite{scorpan/the-wild-world-of-4-manifolds}\cite[p.2]{milnor/topological-manifolds-and-smooth-manifolds}.
  \item The (smooth) Poincare conjecture for the $n$-dimensional
        sphere $S^{n}$ has been resolved except for $n=4$, which
        remains widely open to date despite several attempts.
  \item The Universe where we live seems to be best-modeled by a
        $4$-manifold.
\end{enumerate}

\noindent Despite its wildness, in dimension $4$ the notion of
smooth manifolds coincides with the notion of piecewise-linear
(PL) manifolds \cite[sec.IX.1.1]{turaev-qiok-3-manifolds}. The
data of the later can be made combinatorial and concrete, and is
what we will really be working on. From now on, unless further
specified, by a manifold we mean an oriented, connected, closed
and piecewise-linear manifold in real dimension $4$.

\subsection{Triangulation} \label{subsection/triangulation}

This section is standard
\cite[chap.1]{rourke-sanderson/intro-to-pl-topology}
\cite[sec.2]{manolescu/lectures-on-the-triangulation-conjecture}
but included for completeness.

\begin{definition}[simplicial complex]\label{def/simplicial-complex}
  An abstract simplicial complex is a pair $K = (V, S)$ of finite
  sets $V$ and $S \subset 2^{V}$, such that $\tau \in S$ whenever
  $\sigma \in S$ and $\tau \subset \sigma$. For a subset
  $S' \subset S$, its closure is
  $$\overline{S'} = \{\tau \in S \,|\, \tau \subset \sigma \in S'\}.$$
  Given a simplex $\tau$, its star and its link are
  $$Star(\tau) := \{\sigma \in S \,|\, \tau \subset \sigma\}, \quad
  Link(\tau) := \{\sigma \in \overline{Star(\tau)} \,|\, \tau \cap \sigma = \phi\}.$$
  We say that $K$ is an abstract combinatorial manifold (possibly
  with boundary) of dimension $n$ if the link of each of its
  simplices (or equivalently each of its vertices) is PL
  homeomorphic to either a sphere or a disk, and if top cell has
  dimension $n$. The geometric realization $|K|$ of $K$ is
  defined as usual by gluing $k$-dimensional simplices inducively
  on $k \geq 0$.

  An orientation of a combinatorial manifold is an ordering of
  the vertices up to even permutations. We define the standard
  $n$-simplex to be $\Delta_{n} = \{0, 1, 2, \ldots, n\}$, with
  $[0 < 1 < \ldots < n]$ being its standard orientation. Its
  $k$th face is defined to be
  $$\Delta_{n}(\widehat{k}) = (\text{-} 1)^{k}\Delta(012 \ldots \widehat{k} \ldots n).$$
  For example, the standard oriented $4$-simplex
  $\Delta_{4}(01234)$ has a $3$-dimensional face being
  $\Delta_{4}(\widehat{1}) = \text{-} \Delta_{4}(0234) = \Delta_{4}(2034) = \ldots$.
  This face, in turn, has another $2$-dimensional face
  $\Delta_{4}(\widehat{12}) = \Delta_{4}(034)$. In general, for
  $i<j$, denote
  $\Delta(\widehat{ij}) = (\text{-} 1)^{i+j \text{-} 1}(0 \ldots \widehat{i} \ldots \widehat{j} \ldots 4)$.
\end{definition}

\begin{definition}[Pachner move]\label{def/pachner-move}
  Let an abstract combinatorial manifold $K = (V,S)$ of
  dimension $n$. A Pachner $(1,n+1)$-move along a top-simplex
  $\tau \in S$ is defined to be $K \rightsquigarrow K'$, where
  $$
  K' = \left(
    V \coprod \{\star\},\quad (S\setminus\{\tau\}) \coprod (\coprod_{f} (f \cup \{\star\}))
  \right),
  $$
  where $f$ funs through each face of $\tau$. A Pachner
  $(2,n)$-move along two top-simplices $\tau, \tau' \in S$ that
  share a face $f \in S$ is defined to be
  $K \rightsquigarrow K'$, where
  $$ K' = \left (V,\quad (S\setminus\{\tau,\tau'\}) \coprod (\coprod_{g} (g \cup \{\star, \star'\})) \right),$$
  where $g$ runs through each face of $f$, and $\star$ ($\star'$,
  resp.) denotes the opposite vertex of $f$ in $\tau$ ($\tau'$,
  resp.). We say the inverses are Pachner $(n,2)$-moves and
  $(n+1,1)$-moves respectively. Denote $K \sim K'$ if $K'$ can be
  obtained by $K$ via a finite sequence of Pachner moves.
  \begin{center}
    \includegraphics[height=4cm]{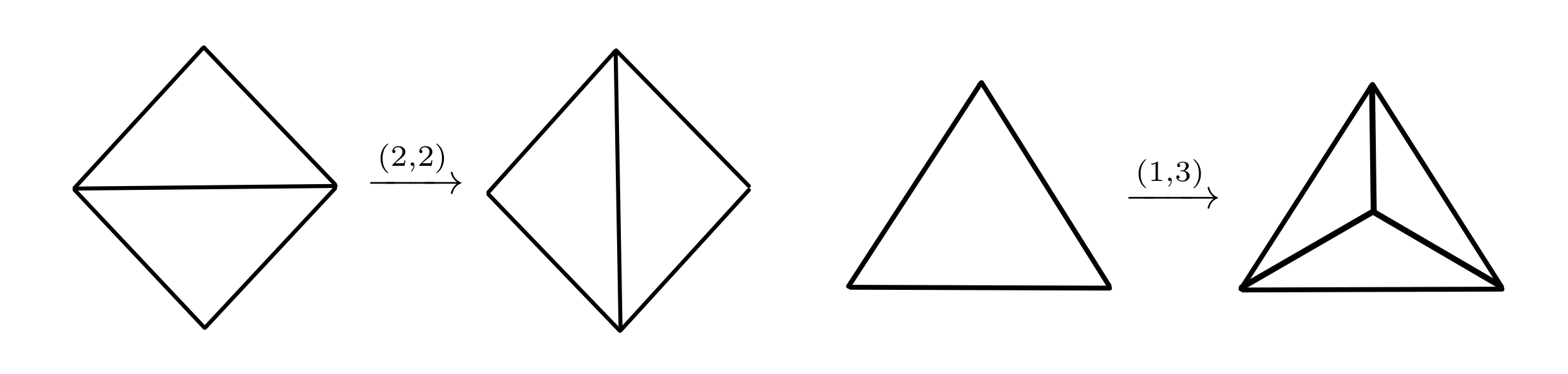}
  \end{center}
\end{definition}

\noindent Notice that a Pachner move $K \rightsquigarrow K'$
induces naturally a PL homeomorphism $K \xrightarrow{\sim} K'$.

\begin{definition}[triangulation of PL-manifolds] Let $X$ be a
  piecewise-linear manifold. A triangulation of $X$ is a
  PL-homeomorphism $X \xrightarrow{\phi} |K|$ for some
  combinatorial manifold $K$.
\end{definition}

\begin{fact}
  Any piecewise-linear manifold $X$ has a triangulation
  $\phi: X \simeq |K|$. Any other triangulation
  $\phi': X \simeq |K'|$ satisfies $K \sim K'$. Finally, an
  orientation of $X$ restricts to a coherent orientation for each
  top cell of $K$.
\end{fact}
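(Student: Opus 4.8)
The plan is to obtain the three assertions from three classical inputs of PL topology, one per sentence.

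\textbf{Existence of a (combinatorial) triangulation.} By definition a piecewise-linear manifold is a topological space with a maximal atlas of charts into $\mathbb{R}^{4}$ or $\mathbb{R}^{4}_{\geq 0}$ with piecewise-linear transition maps; in particular $X$ is a polyhedron (compact, since $X$ is closed), and every polyhedron admits a triangulation $\phi\colon X \xrightarrow{\sim} |K|$ with $K$ a simplicial complex \cite[ch.~2]{rourke-sanderson/intro-to-pl-topology}, \cite[sec.~IX.1.1]{turaev-qiok-3-manifolds}. The local models then force the link of each vertex of $K$ to be a PL $3$-sphere (interior vertex) or PL $3$-disk (boundary vertex); an induction on $\dim\sigma$, using that the link of a face of a PL sphere (resp.\ disk) is again a PL sphere (resp.\ disk), upgrades this to the link of every simplex, so $K$ is a combinatorial manifold. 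In real dimension $4$ nothing delicate happens here --- non-combinatorial triangulations are a feature of \emph{topological}, not PL, manifolds and occur only in dimension $\geq 5$.

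\textbf{Uniqueness up to Pachner moves.} Given $\phi\colon X\simeq|K|$ and $\phi'\colon X\simeq|K'|$, the composite $\phi'\circ\phi^{-1}\colon|K|\to|K'|$ is a PL homeomorphism, hence becomes a simplicial isomorphism after passing to suitable subdivisions $L$ of $K$ and $L'$ of $K'$ (a standard fact about PL maps). Since isomorphic complexes are trivially Pachner-equivalent, this reduces the claim to showing $K\sim K''$ whenever $K''$ is a subdivision of $K$. For this I would invoke Alexander's stellar theorem: any subdivision of a simplicial complex is reached from it by a finite alternating sequence of starrings and inverse starrings. The remaining --- and only substantive --- step is that, performed inside a \emph{combinatorial} manifold, each starring and inverse starring can be rewritten as a finite composite of the Pachner moves of Definition \ref{def/pachner-move}; this is exactly Pachner's theorem (U.~Pachner, \emph{P.L.\ homeomorphic manifolds are equivalent by elementary shellings}, European J.\ Combin.\ \textbf{12} (1991), 129--145). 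Chaining everything gives $K\sim K'$, and finiteness of all complexes (from closedness of $X$) ensures the sequences terminate.

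I expect the subdivision and Alexander steps to be routine bookkeeping, and the genuine obstacle to be Pachner's theorem itself. Its proof is an induction on dimension: the closed star of the simplex one starrs at is a shellable simplicial ball, and the starring is realised by deleting the top faces of that ball one at a time in a shelling order, each deletion being a single Pachner move (bistellar flip); the care lies in the faces of the ball lying on the boundary of the star and in verifying that every intermediate complex is still a manifold, so that the next move is legitimate.

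\textbf{Orientation.} An orientation of $X$ is a compatible system of local orientations; transporting the standard orientation of $\mathbb{R}^{4}$ through the charts and through $\phi$ orients each top $4$-simplex of $K$ (a choice of ordering of its vertices up to even permutation), and the agreement of the charts' orientations on overlaps is precisely coherence of these orientations across the $3$-faces of $K$ --- a one-chart local check requiring no global input. One also reads off directly from Definition \ref{def/pachner-move} that a Pachner move sends a coherently oriented complex to a coherently oriented one, so the induced orientation is well defined on the whole $\sim$-class and, up to a global flip, recovers the two orientations of the connected manifold $X$.
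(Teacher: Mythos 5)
The paper records this as a citation-backed Fact (the section opens by declaring it ``standard'') and offers no proof, so there is no in-paper argument to compare against; your proposal supplies the standard one and is correct in substance. Your route for the uniqueness claim --- PL homeomorphism becomes a simplicial isomorphism after subdividing both sides, Alexander's stellar theorem reduces subdivision to a chain of starrings, and each starring is then realised by bistellar flips --- is one of the two usual derivations, the other being Pachner's own shelling argument, which works directly with the dual handle decomposition and never passes through stellar moves. One small slip of attribution: the statement ``each starring and inverse starring in a combinatorial manifold is a finite composite of Pachner moves'' is \emph{not} the literal content of Pachner's theorem; it is a lemma from which Pachner's theorem follows (together with Alexander), and the cleaner reference for the stellar-to-bistellar translation your argument actually invokes is W.~B.~R.~Lickorish, \emph{Simplicial moves on complexes and manifolds}, Geom.\ Topol.\ Monogr.\ \textbf{2} (1999), 299--320. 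The existence and orientation paragraphs are fine; your closing observation that the Pachner moves of Definition~\ref{def/pachner-move} visibly preserve coherent orientation, so that the induced orientation is well defined on the whole $\sim$-class, is a useful complement to the Fact that the paper leaves implicit.
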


\subsection{Handle decomposition}

\noindent By Morse's theory of extremal points, any smooth
manifold admits a handle decomposition. By Cerf theory, two
handle decompositions present the same manifold (up to
diffeomorphism) if and only if both decomposition data are
related by a finite sequence of handle creations, handle
annihilations, and handle slides
\cite{gompf-stipsicz/4-manifolds-and-kirby-calculus}. A
triangulation of a manifold admits a natural handle decomposition
by taking dual. The correct state sum based on this datum is the
universal state sum \cite{walker/universal-state-sum}; it
transforms a handle decomposition into a number. A useful fact to
notice is that closed $4$-manifolds are reconstructible from
their handles of indices $0$, $1$, and $2$
(\ref{remark/reconstruction-of-4-manifolds}).

\subsection{Shadow}


\noindent A shadow is another type of structure that encodes
closed $4$-manifolds. Roughly speaking, a shadow is a
$2$-polyhedron with extra decorations (called gleams) that
remember the twisting data. A $2$-polyhedron is a topological and
combinatorial object that encodes $3$-dimensional manifolds
\cite{matveev/algorithmic-topology-and-classification-of-3-manifolds}.
It is called a pre-foam in the literature of Khovanov homology
(from foams) \cite{khovanov-robert/foam}.

\begin{definition}[tripod]\label{def/tripod}
  Define the standard tripod to be the topological subspace of
  $\mathbb{R}^{3}$ consisting of the points $(x,y,z)$ such that
  at least two of the entries are zero, and the last entry
  belongs to $[0,1)$. Define a tripod to be any topological space
  homeomorphic to the standard tripod.
\end{definition}

\begin{definition}[cone]\label{def/cone}
  For each topological space $X$, define its standard open cone
  $cone(X)$ to be the quotient space
  $(X \times \mathbb{R_{\geq 0}})/((x,0) \sim (x',0)).$ Define an
  open cone of $X$ to be any topological space homeomorphic to
  $cone(X)$.
\end{definition}

\begin{definition}[local shape]\label{def/local-shape}
  Let $X$ be a topological space and $x \in X$. Denote by $T$ the
  standard tripod and $S$ the $1$-skeleton of the boundary of the
  standard tetrahedron (a trivalent graph with $4$ vertices and
  $6$ edges). Respectively, we say that $x$ is a smooth point, a
  line point, a tetrahedral point, a boundary smooth point, or a
  boundary line point of $X$ if it has a relative neighborhood
  homeomorphic to $(\mathbb{R}^{2},0)$,
  $(T \times \mathbb{R}, (0, 0))$, $(cone(S), (*, 0))$,
  $(\mathbb{R} \times \mathbb{R}_{\geq 0}, (0, 0))$, or
  $(T \times \mathbb{R}_{\geq 0}, (0, 0))$.
\end{definition}

\begin{definition}[simple $2$-polyhedron]\label{def/simple-2-polyhedron}
  A simple $2$-polyhedron with boundary is defined to be a
  piecewise-linear compact CW-complex $P$ of real dimension two,
  such that each of its point $p$ is either a smooth point, a
  line point, a tetrahedral point, a boundary smooth point, or a
  boundary line point. If only the first three types are
  involved, we call $P$ a simple $2$-polyhedron without boundary.
\end{definition}

\begin{definition}[components of a simple $2$-polyhedron]\label{def/components-of-a-simple-2-polyhedron}
  Let $P$ be a simple $2$-polyhedron with boundary. Define the
  set of smooth points (or called interior points) of $P$ to be
  $Int(P)$. Define the set of line points, tetrahedral points,
  and boundary line points to be $sing(P)$. Define the set of
  boundary line points and boundary smooth points to be
  $\partial P$. Call a connected component of $Int(P)$ to be a
  region of $P$; define the set of regions to be $Region(P)$. $P$
  is said to be orientable if each region of $P$ is orientable.
  An orientation of $P$ is an assignment of orientations to each
  of the region.
\end{definition}

\begin{figure}[ht]
\centering
\begin{tikzpicture}
\input{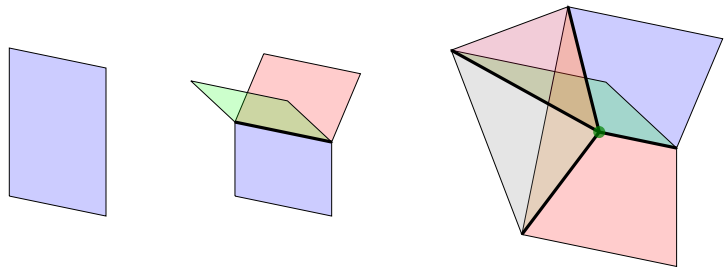}
\end{tikzpicture}
\caption{The graphic is taken from \cite{khovanov-robert/foam}.}
\label{figure/3localmodel}
\end{figure}

\begin{definition}[shadowed $2$-polyhedron]\label{def/shadowed-2-polyhedron}
  Let $P$ be a simple $2$-polyhedron, and $A$ an abelian group
  with a distinguished element $\omega \in A$. We define a shadow
  to be a pair of an orientable $2$-polyhedron $P$ and a map
  (called gleam) $gl: Region(P) \to A$. Unless specified further,
  we assume that $A = \mathbb{Z}\left[\frac{1}{2}\right]$ and
  $\omega = \frac{1}{2}$. We denote $-P$ to be the same simple
  $2$-polyhedron but with all gleams flipped by $(a \mapsto -a)$.
\end{definition}

\noindent For each connected oriented closed surface $\Sigma$ and
each $a \in A$, there is a shadowed $2$-polyhedron $\Sigma_{a}$
which consists of $\Sigma$ with the gleam $a$ assigned to the
only region. For example, $S^{2}_{0}$ denotes the $0$-gleamed
$2$-sphere.

\begin{definition}[nullity of a shadowed $2$-polyhedron]\label{def/nullity-of-a-shadowed-2-polyhedron}\cite[section VIII.5.1]{turaev-qiok-3-manifolds}
  Let $P$ be an oriented shadowed $2$-polyhedron. For each region
  $Y$ of $P$, the contraction map
  $(P / \partial P) \to P / (P \setminus Y)$ and the orientation
  of $Y$ induces a map
  $$H_{2}(P;\partial P) \to \mathbb{Z}; h \mapsto \langle h | Y \rangle.$$
  Define the symmetric bilinear form $\tilde{Q}_{P}$ on
  $H_{2}(P; \partial P)$ by summing over all regions of $Y$
  $$\tilde{Q}_{P}(h_{1}, h_{2}) = \sum_{Y} \langle h_{1}|Y\rangle \langle h_{2}|Y\rangle gl(Y) \in A$$
  and restrict it to $Q_{P}$ along the natural map
  $H_{2}(P) \to H_{2}(P; \partial P)$ (which is injective by a
  usual argument using long exact sequence). $H_{2}(P)$ is a free
  abelian group, and so is $Ann(Q_{P})$. Finally, define the
  nullity of $P$ to be $null(P) = rank(Ann(Q_{P}))$.
\end{definition}

\noindent We remark that if the shadowed polyhedron comes from a
$4$-manifold $X$, then the bilinear form defined in the previous
definition coincide with the intersection form of $X$
\cite[section IX.5]{turaev-qiok-3-manifolds}.

\begin{definition}[shadow moves]\label{def/shadow-moves}
  \details{\cite[section VIII.1.3,
    p.369]{turaev-qiok-3-manifolds}} The basic shadow moves
  $P_{1}, P_{2}, P_{3}$ are given in the following graphics
  (taken from \cite{turaev-qiok-3-manifolds}). A shadow move is a
  finite composition of the $P_{i}^{\pm 1}$'s.
  \begin{center}
    \includegraphics[height=13cm]{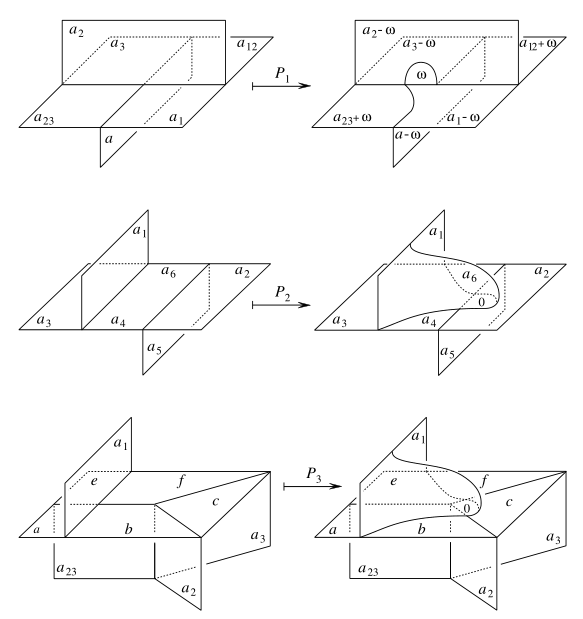}
  \end{center}
\end{definition}

\begin{definition}[shadow]\label{def/shadow}
  A shadow is an equivalence class of shadowed $2$-polyhedron $P$
  up to a shadow move. We denote the shadow by $[P]$, and say
  that $P$ represents the shadow $[P]$
  \cite[p.370]{turaev-qiok-3-manifolds}.
\end{definition}

\noindent For two connected shadow $[P]$ and $[P']$, we construct
the shadow $[P]+[P']$ as follows. Arbitrarily identify two
arbitrarily chosen closed disks $D \subset Int(P)$ and
$D' \subset Int(P')$ in $P \coprod P'$, and equip the interior of
$D$ (a new region) with gleam $0$. So defines a simple
$2$-polyhedron and we say that it represents $[P]+[P']$. It is
well-defined by \cite[lemma VIII.2.1.1]{turaev-qiok-3-manifolds}.
For an integer $m \in \mathbb{Z_{\geq 0}}$, we define $m[P]$ as
the sum of $m$-many $[P]$.

\begin{definition}[stable shadow]\label{def/stable-shadow}
  Two connected shadowed polyhedra $P$, $P'$ are called stably
  shadow equivalent if there exists
  $n, n' \in \mathbb{Z_{\geq 0}}$ such that
  $[P] + m[S^{2}_{0}] = [P'] + m'[S^{2}_{0}]$. Extend the
  definition to non-connected ones in an obvious fashion. A
  stable shadow is defined to be a shadowed polyhedron up to
  stable shadow equivalence. Denote the stable shadow of $[P]$ to
  be $stab([P])$.
\end{definition}

\noindent We are ready to present a closed $4$-manifold in terms
of shadows.

\begin{definition}[locally flat $2$-polyhedron in a
  $4$-manifold]\label{def/locally-flat-2-polyhedron-in-a-4-manifold}
  Let $X$ be a closed $4$-manifold. A $2$-polyhedron $P$ in $X$
  is flat at a point $p \in P$ if there exists a neighborhood $U$
  of $p$ in $X$ such that $U \cap P$ lies in a $3$-dimensional
  submanifold of $X$. We say that $P$ is locally flat if it is
  flat at all $p \in P$ (\cite[p.394]{turaev-qiok-3-manifolds}).
\end{definition}

\begin{definition}[skeleton of a $4$-manifold]\label{def/skeleton-of-a-4-manifold}
  Let $X$ be a closed $4$-manifold. A skeleton
  \cite[p.395]{turaev-qiok-3-manifolds} of $X$ is a locally flat
  orientable simple $2$-polyhedron without boundary $P$ such that
  a closed regular neighborhood of it with some $3$- and
  $4$-handles form $X$.
\end{definition}

\noindent For example, $\mathbb{C}P^{1} = \{[x:y:0]\}$ is a
skeleton of $\mathbb{C}P^{2} = \{[x:y:z]\}$. By \cite[theorem
IX.1.5]{turaev-qiok-3-manifolds}, every $4$-manifold has a
skeleton (by compressing the $(0,1,2)$-handles in an arbitrary
handle decomposition).

\begin{definition}[stable shadow of a $4$-manifold]\label{def/stable-shadow-of-a-4-manifold}
  Let $X$ be a closed $4$-manifold. Take a skeleton $P$ of $X$
  and construct a shadowed simple $2$-polyhedron by assigning
  gleams to the regions $\Sigma$ in the following way.
  \begin{enumerate}
    \item If $\Sigma$ is homeomorphic to a closed surface, define
          the gleam to be the self-intersection (which is
          independent of the orientation of $\Sigma$)
          $$([\Sigma] \cdot [\Sigma]) \in H_{0}(X;\mathbb{Z}) = \mathbb{Z} \subset \mathbb{Z}\left[1/2\right].$$
    \item Otherwise, $\Sigma$ is non-compact. Deformation retract
          it to a compact subsurface $\Sigma_{0}$. Denote $N$ to
          be the normal bundle of $\Sigma_{0}$ in $X$. Consider
          the line bundle $l$ over $\partial \Sigma_{0}$ by
          \cite[section VIII.6.2,
          p.397]{turaev-qiok-3-manifolds}, which may be regarded
          as a sub-bundle of $N|_{\partial \Sigma_{0}}$. The
          circle bundle $\mathbb{P}(N)$ is trivial over
          $\Sigma_{0}$ since the later is a homotopy $1$-type.
          With a choice of an orientation of $\Sigma_{0}$ and
          $X$, $l$ induces a section of
          $\mathbb{P}(N)|_{\partial}$. The obstruction class of
          this section to the whole $\mathbb{P}(N)$ is an element
          of
          $H^{2}(\Sigma_{0}, \partial \Sigma_{0}; \pi_{1}(S^{1})) = \mathbb{Z}$.
          Finally, define the gleam to be the half of the
          resulting integer (which is independent to the choice
          of $\Sigma_{0}$).
  \end{enumerate}
  It is the main theorem of \cite[section
  IX.1.7]{turaev-qiok-3-manifolds} that all shadowed polyhedra
  chosen in such fashion above are all stably shadow equivalent.
  Therefore, it defines the stable shadow $sh(X)$ of the closed
  $4$-manifold $X$.
\end{definition}

\begin{example}
  $sh(\pm\mathbb{C}P^{2}) = stab([S^{2}_{\pm 1}])$ and
  $sh(S^{4}) = stab([S^{2}_{0}])$.
\end{example}

\noindent A handle decomposition of a closed $4$-manifold $X$
gives rise to a shadow of $X$ \cite[section
IX.4]{turaev-qiok-3-manifolds}. The explicit construction will be
recalled below in
\ref{def/shadow-of-a-4-manifold-from-a-handle-decomposition},
which will be used to prove our main theorem.

\begin{definition}[skeleton of a $3$-manifold]\label{def/skeleton-of-a-3-manifold}
  Let $Y$ be a closed $3$-manifold. A skeleton of $Y$ is an
  orientable simple $2$-polyhedron without boundary $P \subset Y$
  such that $Y \setminus P$ is a disjoint union of open
  $3$-balls \cite[p. 400]{turaev-qiok-3-manifolds}.
\end{definition}

\begin{definition}[shadow cone of a framed link in a
  $3$-manifold]\label{def/shadow-cone-of-a-framed-link-in-a-3-manifold}
  \noindent Every compact $3$-manifold $Y$ has a skeleton
  \cite[theorem IX 2.1.1]{turaev-qiok-3-manifolds}. For example,
  the equator $S^{2}$ of $S^{3}$ is a skeleton. Let $P$ be a
  skeleton of $Y$ and $l$ be a framed link in $Y$. Projecting $l$
  generically onto $P$ induces a shadow projection. Assign gleams
  around each crossing point as in \cite[figure
  IX.3.4]{turaev-qiok-3-manifolds}. Then construct the shadow by
  naturally attaching a disk along each projected component on
  $P$ (as a new region) endowed with zero gleam. Denote the
  resulting shadow to be $CO(Y,l)$ (well-defined up to stable
  shadow moves \cite[section IX.3.3]{turaev-qiok-3-manifolds}).
\end{definition}

\begin{definition}[shadow of a $4$-manifold from a handle
  decomposition]\label{def/shadow-of-a-4-manifold-from-a-handle-decomposition}
  Let $X$ be an oriented $4$-manifold and
  $H = \bigcup_{i=0}^{4} H_{i}$ be a handle decomposition, where
  $H_{i}$ denotes the union of the handles of index $i$. Define
  $Y$ to be the closed $3$-manifold $\partial(H_{0} \cup H_{1})$.
  By the definition of handle decomposition, the gluing datum of
  $H_{2}$ onto the handles with lower indices is encoded as a
  link $l$ in $Y$. Define the stable shadow $sh'(X,H)$ to be
  $CO(Y,l)$.
\end{definition}

\begin{remark}\label{remark/stable-shadow-of-a-4-manifold}
  It is a theorem of \cite[sec.IX.4.2]{turaev-qiok-3-manifolds}
  that $sh'(X,H)$ does not depend on the choice of $H$ as a
  stable shadow. In fact, $sh'(X,H)$ equals the stable shadow
  $sh(X)$ \cite[sec. IX.7]{turaev-qiok-3-manifolds}.
\end{remark}

\begin{remark}\label{remark/reconstruction-of-4-manifolds}\cite[section 4.4]{gompf-stipsicz/4-manifolds-and-kirby-calculus}
  The handles of indices $\leq 2$ are enough to reconstruct the
  whole closed $4$-manifold.
\end{remark}

\section{Sum $\left( \int_{T}{A} \right)$}\label{section/sum}
\subsection{Crane-Yetter state sum}

\noindent Throughout this section, let $C$ to be a coordinated
premodular category and $I$ be the set of simple $C$-objects.



\begin{definition}[colored combinatorial manifold]
  A $C$-coloring of a combinatorial manifold $X$ is a map
  $\beta: X_{2} \to I$, where $X_{2}$ denotes the set of oriented
  $2$-simplices of $X$, such that
  $\beta(\text{-}x) = \beta(x)^{\star}$ for all $x \in X_{2}$. A
  $C$-colored combinatorial manifold is a pair of a combinatorial
  manifold and a $C$-coloring of X.
\end{definition}

\begin{definition}[$10$j symbol for a colored simplex]\label{def/10j-symbol-for-a-C-colored-simplex}
  Let $\Delta$ be a $4$-simplex with a total ordering on the set
  of vertices, and let $\beta$ to be a $C$-coloring for $\Delta$.
  $C$-colored simplex. Denote $\beta_{\widehat{ab}}$ to be the
  color $\beta(\Delta_{4}(\widehat{ab})) \in I$ assigned to the
  oriented $2$-cell $\Delta_{4}(\widehat{ab})$. We define the
  $10$j symbols for $(\Delta,\beta)$ to be the $10$j-symbols
  (\ref{def/10j-symbol})
  $$
  10j(\Delta) = \tenJSymbol{\beta_{\widehat{01}}}{\beta_{\widehat{02}}}{\beta_{\widehat{03}}}{\beta_{\widehat{04}}} {\beta_{\widehat{12}}}{\beta_{\widehat{13}}}{\beta_{\widehat{14}}} {{\beta_{\widehat{23}}}{\beta_{\widehat{24}}}{\beta_{\widehat{34}}}}, \quad
  \overline{10j}(\Delta) = \tenJSymbolMirrored{\beta_{\widehat{01}}}{\beta_{\widehat{02}}}{\beta_{\widehat{03}}}{\beta_{\widehat{04}}} {\beta_{\widehat{12}}}{\beta_{\widehat{13}}}{\beta_{\widehat{14}}} {{\beta_{\widehat{23}}}{\beta_{\widehat{24}}}{\beta_{\widehat{34}}}}.
  $$
\end{definition}

\begin{definition}[Crane-Yetter state sum for a closed
  $4$-manifold]\label{def/crane-yetter-state-sum-for-a-closed-4-manifold}
  Let $X$ be an connected, oriented, closed piecewise-linear
  manifold, $\phi: X \xrightarrow{\sim} |K|$ a triangulation,
  $\beta: K_{2} \to I$ a $C$-coloring of $K$, and $\tau$ a total
  ordering on the set of vertices of $K$.

  For each $4$-simplex $\Delta$ of $K$, we assign a $10$j-symbol
  $10J(\beta,\Delta)$ as follows. If the orientation restricted
  from $X$ agrees with that from $\tau$ (i.e.
  $[X]|_{\Delta} = \tau|_{\Delta}$, or say of coherent
  orientation), then we assign
  $10J(\beta,\Delta) = 10j(\Delta,\beta|_{\Delta})$; otherwise,
  if $[X]|_{\Delta} = -\tau|_{\Delta}$ (or say decoherent
  orientation), then we assign
  $10J(\beta,\Delta) = \overline{10j}(\Delta,\beta_{\Delta})$.

  Now each $4$-simplex has a $10$j-symbol, which is
  just a linear map. Recall that the oriented $3$-simplices
  correspond to morphism spaces. We will contract the linear maps
  (taking a huge trace) using the fact that each $3$-simplex
  $\Delta'$ is the face of exactly two $4$-simplices. More
  concretely, observe that there are two cases.
  \begin{itemize}
    \item Both of them have coherent (or decoherent) orientations.
    \item One of them has coherent orientation, while the other
          has decoherent orientation.
  \end{itemize}
  In the first case, the corresponding vertices
  (\ref{graph/10j-symbol}) of the $C$-colored graphs that underly
  the assigned $10$j-symbols have the incoming and outgoing
  arrows exchanged. In the second case, the orientations of the
  arrows are the same but the colors are dual. Hence in both
  cases, we can contract the $10$j-symbols along $\Delta'$ as
  usual (\ref{def/contraction}). Since $X$ is a closed
  $4$-manifold, the final result is an element in the underlying
  field (i.e. a number).

  Finally, we define the Crane-Yetter state sum of $X$ to be the
  number
  $$\int_{X}^{CY} C \, := \, D^{2(n_{0}\text{-}n_{1})} \sum_{\beta} \prod_{f} dim(\beta(f)) \left(\ast \bigotimes_{\Delta} 10J(\beta;\Delta)\right),$$
  where $D^{2}$ denotes the global dimension of $C$, $n_{0}$
  denotes the amount of vertices, $n_{1}$ denotes the amount of
  edges, the sum runs over all possible $C$-colorings $\beta$ of
  $K$, the product runs through all faces $f$ of $K$ (recall
  $dim(x)=dim(x^{\star})$ for all $x$ in $C$), the tensor product
  runs through all $4$-simplices of $K$, and $\ast$ denotes the
  large contraction specified above.

  The result only depends on the PL-homeomorphism type of $X$ due
  to the invariance under Pachner moves. We refer the curious
  readers to the original paper
  \cite{crane-yetter/a-categorical-construction-of-4d-tqft}
  \cite{crane-kauffman-yetter/crane-yetter-state-sum}.
\end{definition}

\noindent The original state sum uses $15$j-symbols and therefore
involves a product running through the $3$-simplices. The term is
absent here because it is absorbed into the $10$j symbols. The
state sum is expected to be extended to a fully extended
topological quantum field theory (\cite[section 1.5]{bjs2018}
\cite{cooke2019excision}
\cite{integrating-quantum-groups-over-surfaces}
\cite{fac-homo--kirillov-tham}). For explicit evaluations of the
Crane-Yetter model see \cite{barenz/evaluation-crane-yetter} (for
numerical values on $4$-folds) and
\cite{guu/higher-genera-center} (for categorical values on
$2$-folds).

\subsection{Shadow state sum} \label{subsection/shadow-state-sum}

Throughout this subsection (\ref{subsection/shadow-state-sum}),
we fix an orientable shadowed $2$-polyhedron $P$ (over
$\mathbb{Z}[\frac{1}{2}]$, with boundary), a coordinated
premodular category $C$, and its set of simple objects $I$. Our
goal is to define the shadow state sum $\int_{P}^{sh} C$.

\begin{definition}[module of a trivalent graph]\label{def/module-of-a-trivalent-graph}
  Let $K_{0}$ be the empty graph and$\gamma$ be a trivalent
  graph. A $C$-coloring of $\gamma$ is a map
  $$\{\mbox{oriented edge of } \gamma\} \xrightarrow{\lambda} I, \quad \mbox{with } \lambda(e) = \lambda(-e)^{\star}.$$
  Define a $\mathbb{k}$-module
  $$H(\lambda) = \bigotimes_{x} H(\lambda_{x}, \lambda_{x}', \lambda_{x}''),$$
  where $H$ denotes the symmetrized modules
  (\ref{def/symmetrized-multiplicity-module}), $x$ runs through
  all vertices of $\gamma$ and the $\lambda_{x}$'s denote the
  colors assigned to the nearby edges oriented toward $x$.
  Define $\mathbb{k}$-modules
  $$H(\gamma) = \bigoplus_{\lambda \in color(\gamma; C)} H(\lambda), \quad H(K_{0}) = \mathbb{k},$$
  where $color(\gamma; C)$ denotes the set of $C$-colorings of
  $\gamma$.
\end{definition}

By a $C$-coloring of $P$ we mean a map $\phi$ from the set of
oriented regions of $P$ to $I$ such that
$\phi(\Sigma) = \phi(-\Sigma)^{\star}$. Denote by $color(P; C)$
the set of all $C$-colorings of $P$. An orientation of a $2$D
region induces an orientation on its edges by
$(\vec{n} \wedge \text{-})$, where $\vec{n}$ denotes a vector
pointing outward from the region. Therefore, a $C$-coloring
$\phi$ of $P$ induces a $C$-coloring $\partial \phi$ of its
boundary $\partial P$, a trivalent graph.

\begin{definition}[shadow state sum]\label{def/shadow-state-sum}\details{\cite[section X.1.2]{turaev-qiok-3-manifolds}}
  Every $C$-coloring on $\partial P$ extends to some $C$-coloring
  on $P$, so
  $H(\partial P) = \sum_{\phi \in color(P; C)} H(\partial \phi)$.
  Fix a $\phi \in color(P; C)$, and define the following
  $\mathbb{k}$-modules and vectors.
\begin{itemize}
  \item For each oriented edge $\vec{e}$ in
        $P \setminus \partial P$, define $H_{\phi}(\vec{e})$ to
        be $H(i,i',i'')$ where the $i$'s are the colors assigned
        to the three adjacent regions compatibly oriented with
        $\vec{e}$.
  \item For each (unoriented) edge $e$ in
        $P \setminus \partial P$, define $H_{\phi}(e)$ to be the
        non-ordered tensor product
        $H_{\phi}(\vec{e}) \otimes H_{\phi}(\text{-}\vec{e})$
        with an arbitrary orientation $\vec{e}$. The pairing
        (\ref{def/contraction}) defines a canonical vector
        $|e|_{\phi} \in H_{\phi}(e)$.
  \item For each tetrahedral point $x \in P$, pick a small enough
        neighborhood $U$ of $x$ in $P$ homeomorphic to the cone
        of the $1$-skeleton of the boundary of some tetrahedron.
        The closure $\overline{U}$ a $C$-colored $2$-polyhedron
        with four boundary line points
        $x_{0}, x_{1}, x_{2}, x_{3}$ and six $C$-colored regions.
        Denote by $\phi_{ij}$ the color for the oriented region
        $\overrightarrow{xx_{i}x_{j}}$ (clearly,
        $\phi_{ij} = \phi_{ji}^{\star}$). Finally, define a
        vector and a $\mathbb{k}$-module
        $$|x|_{\phi} := \normalizedSixJSymbol{\phi_{01}}{\phi_{02}}{\phi_{30}}{\phi_{32}}{\phi_{13}}{\phi_{21}} \in \bigotimes_{i=0}^{3} H_{\phi}(\overrightarrow{x_{i}x}) =: H_{\phi}(x),$$
        where $\otimes$ denotes the unordered tensor product of
        $\mathbb{k}$-modules. The result is independent to the
        labeling $0, 1, 2, 3$.
\end{itemize}

\noindent The procedure above defines a vector in the
$\mathbb{k}$-module
$$(\otimes_{x} |x|_{\phi}) \otimes (\otimes_{e} |e|_{\phi}) \in \left( \bigotimes_{x} H_{\phi}(x) \right) \otimes \left( \bigotimes_{e} H_{\phi}(e) \right)$$
where $x$ runs over all tetrahedral points of $P$, and $e$ runs
over all (nonoriented) edges of $P \setminus \partial P$. By
contracting the vector along all tetrahedral points $x$ and all
edges $e$ whose boundary points are not both in $\partial P$, we
obtain a vector in $|\phi| \in H(\partial \phi)$. Finally, we
define the shadow state sum to be
$$\left(\int^{sh}_{P} C \right) = \left( D^{\text{-} b_{2}(P) \text{-} null(P)} \sum_{\phi \in color(P)} \sigma_{\phi} \, |\phi| \right) \in \sum_{\phi} H(\partial \phi) = H(\partial P),$$
where $D$ denotes the global dimension, $b_{2}$ denotes the
second betti number, $null$ denotes the nullity
(\ref{def/nullity-of-a-shadowed-2-polyhedron}), and
$\sigma_{\phi} \in \mathbb{k}$ is a normalizing constant defined
as
$$\sigma_{\phi} = \prod_{e} dim_{C}'(\partial\phi(e))^{-1} \prod_{Y} dim_{C}(\phi(Y))^{\chi(Y)} \nu'_{\phi}(Y)^{2gl(Y)} \prod_{g} dim_{\mathbb{k}}(Hom_{C}(V_{0}, V_{i} \otimes V_{j} \otimes V_{k})),$$
where $e$ runs over edges of $\partial P$ (but not circle
$1$-strata), $Y$ runs over regions of $X$, $g$ runs over circle
$1$-strata of $sing(X)$, and $\chi$ denotes the Euler
characteristics.
\end{definition}

\begin{proposition}[shadow state sum is invariant under stable
  shadow
  move]\label{prop/shadow-state-sum-is-invariant-under-stable-shadow-move}
  Let $C$ be a premodular category and $P, P'$ be $2$-polyhedra
  that are equal as stable shadows. Then
  $$\int^{sh}_{P} C = \int^{sh}_{P'} C.$$
  Namely, shadow state sum is invariant under stable shadow move.
\end{proposition}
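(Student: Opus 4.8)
The plan is to show invariance under each of the three generating shadow moves $P_1^{\pm 1}, P_2^{\pm 1}, P_3^{\pm 1}$ separately, plus invariance under connect-summing with $S^2_0$ (the stabilization move). Since every stable shadow equivalence is a finite composition of these, local invariance for each suffices. For each move, the combinatorial change is supported in a small ball, so I would isolate the regions, edges, and tetrahedral points of $P$ and $P'$ that differ, observe that the colorings of $P$ and $P'$ restrict identically to the unchanged part (or are in explicit bijection once we sum over the colors of the affected regions), and then reduce the equality of state sums to a purely algebraic identity among $6j$-symbols contracted against the $Id(i,j,k)$-tensors. Those identities are precisely the ones catalogued in Proposition~\ref{prop/basic-equalities-of-6j-symbols}.

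The correspondence I expect is the standard one from \cite[chapter X]{turaev-qiok-3-manifolds}: the move $P_2$ (which replaces a disk by a configuration with one extra tetrahedral point, or vice versa) is governed by the degenerated $6j$-symbol \eqref{eqn/degenerated-6j}; the move $P_3$ (a bubble/Matveev-type move trading tetrahedral points) corresponds to the orthonormality relation \eqref{eqn/orthonormality-relation}; and the move $P_1$ — the $(2\leftrightarrow 3)$-type reshuffling of tetrahedral points — corresponds to the Biedenharn--Elliott identity \eqref{eqn/Biedenharn-Elliot-identity}. One must also check that the normalizing prefactor $D^{-b_2(P)-\mathrm{null}(P)}$ and the local weight $\sigma_\phi$ (the product of $\dim'$, $\dim^{\chi}$, $(\nu')^{2gl}$, and $\dim_{\mathbb k}\mathrm{Hom}$ factors) transform correctly: each shadow move changes $b_2$, $\mathrm{null}$, the Euler characteristics of the regions, the number of interior edges, and the gleams in a controlled way, and the resulting change in $D$-powers and in $\sigma_\phi$ must exactly cancel the mismatch between the two sides of the $6j$-identity. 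The gleam-dependent factor $\nu'_\phi(Y)^{2gl(Y)}$ is where the Racah identity \eqref{eqn/Racah-identity} enters, since $P_1$ (or the sliding moves) can shift gleams between adjacent regions. For the stabilization move, connect-summing with $S^2_0$ adds one region homeomorphic to $S^2$ with gleam $0$, raising $b_2$ by one and contributing a factor $D^{-1}\sum_{i\in I}\dim(i)^2\cdot(\text{normalizations}) = D^{-1}\cdot D^2\cdot D^{-1}=1$ — so the state sum is genuinely unchanged, which is exactly what makes the stable (rather than merely the unstable) invariant well-defined.

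The main obstacle is bookkeeping rather than conceptual: one has to be scrupulous about the unordered tensor products, the canonical identifications $H_k^{ij}\cong H^{ijk^\star}$, and the orientation conventions for edges induced by region orientations, so that the $6j$-symbols appearing in $|x|_\phi$ line up — with the correct permutation of entries and the correct $\sigma_1,\sigma_2$-twists from Definition~\ref{def/canonical-isomorphism} — with the symbols appearing in \eqref{eqn/Biedenharn-Elliot-identity} and \eqref{eqn/orthonormality-relation}. A secondary subtlety is that $\mathrm{null}(P)$ is genuinely global (it depends on the annihilator of the intersection form $Q_P$), so I must check that the moves $P_1,P_2,P_3$ leave $b_2(P)+\mathrm{null}(P)$ invariant while stabilization increments it by exactly one; this follows from the description of how each move alters $H_2(P;\partial P)$ and the form $\tilde Q_P$, but it needs to be verified case by case. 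Given that Proposition~\ref{prop/basic-equalities-of-6j-symbols} supplies every algebraic identity with no appeal to modularity, and that the topological combinatorics of the moves is unchanged from the modular setting, the proof is essentially that of \cite[theorem X.2.1 (or its analogue)]{turaev-qiok-3-manifolds} carried over verbatim, with premodularity sufficing throughout. I would therefore present the argument as a move-by-move verification, spelling out only the normalization check in detail and citing \cite{turaev-qiok-3-manifolds} for the rest.
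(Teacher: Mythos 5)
Your proposal takes essentially the same route as the paper: reduce to invariance under the basic shadow moves (governed by the orthonormality, Racah, and Biedenharn--Elliott identities from Proposition~\ref{prop/basic-equalities-of-6j-symbols}) plus stabilization (governed by the addition formula and $|S^2_0| = D^{-2}\sum_{i\in I}\dim(i)^2 = 1$), observing that Turaev's modular-category arguments carry over verbatim because none of the $6j$-identities require modularity. The paper packages the stabilization step explicitly via the addition formula $|P_1 + P_2| = |P_1|\otimes|P_2|$ from \cite[theorem X.2.2]{turaev-qiok-3-manifolds}, which you invoke only implicitly, but this is a presentational rather than substantive difference.
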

\begin{proof}
  We start with the special case where $C$ is a modular category.
  For invariance under basic shadow moves, the essential
  ingredients are the orthonormality relation, the Racah
  identity, and the Biedenharn-Elliott identity
  (\ref{prop/basic-equalities-of-6j-symbols}); see \cite[theorem
  X.2.1]{turaev-qiok-3-manifolds} for a proof. For invariance
  under addition of $S^{2}_{0}$, it boils down to proving the
  addition formula $$|P_{1} + P_{2}| = |P_{1}| \otimes |P_{2}|$$
  \cite[theorem X.2.2]{turaev-qiok-3-manifolds} and using the
  equality $|S^{2}_{0}| = D^{\text{-}2} \sum_{i \in I} dim(i)^{2} = 1$.
  Both proofs carry through verbatim to the premodular case.
\end{proof}


\begin{definition}[shadow state sum of a $4$-manifold]\label{def/shadow-state-sum-of-a-4-manifold}
  Let $X$ be a closed $4$-manifold, $C$ a coordinated premodular
  category, $stab([P])$ a stable shadow of $X$ represented by a
  shadowed $2$-polyhedron $P$. Define the shadow state sum
  $\int_{X}^{sh} C$ of $X$ to be $\int_{P}^{sh} C$, which is
  well-defined by \ref{remark/stable-shadow-of-a-4-manifold} and
  \ref{prop/shadow-state-sum-is-invariant-under-stable-shadow-move}.
\end{definition}

\subsection{Main result: equivalence of state sums}

\noindent We state and prove the main theorem of this paper. See
\ref{subsection/a-summary-to-experts} for the main idea of the
proof.

\begin{theorem}[equivalence of state sums]\label{theorem/main-theorem}
  Let $X$ be a closed $4$-manifold and $C$ be a coordinated
  premodular category. Then
  $$\int^{CY}_{X}C = \int^{sh}_{X} C.$$
  Namely, their Crane-Yetter state sum and shadow state sum are
  equal.
\end{theorem}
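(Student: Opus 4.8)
The plan is to fix a triangulation $\phi: X \xrightarrow{\sim} |K|$ of $X$ with a total ordering $\tau$ on its vertices, and from it build an explicit shadowed $2$-polyhedron $P_K$ representing the stable shadow $sh(X)$. The natural candidate is the dual $2$-skeleton of the dual handle decomposition (see Section 2.3 and Remark \ref{remark/reconstruction-of-4-manifolds}): the handles of index $0,1,2$ of the dual decomposition reconstruct $X$, so their cores (dual to the $4$-, $3$-, and $2$-simplices of $K$) assemble into a simple $2$-polyhedron $P_K$ whose regions are dual to the $2$-simplices of $K$, whose edges are dual to the $3$-simplices, and whose tetrahedral points are dual to the $4$-simplices. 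By Definition \ref{def/skeleton-of-a-4-manifold} and the theorem of \cite[IX.1.5]{turaev-qiok-3-manifolds}, $P_K$ is a skeleton of $X$; equipping each region with the gleam forced by the self-intersection formula of Definition \ref{def/stable-shadow-of-a-4-manifold} gives a shadowed polyhedron with $[P_K]$ stably equivalent to $sh(X)$. Then, by Proposition \ref{prop/shadow-state-sum-is-invariant-under-stable-shadow-move}, we may compute $\int^{sh}_X C = \int^{sh}_{P_K} C$.

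The core of the argument is a termwise comparison of the two state sums once both are written in terms of the combinatorics of $K$. A $C$-coloring $\beta$ of $K$ (assigning simple objects to oriented $2$-simplices) is precisely a $C$-coloring $\phi$ of $P_K$ (assigning simple objects to oriented regions) under the duality, and the face-weight factor $\prod_f dim(\beta(f))$ in $\int^{CY}_X C$ matches the region factor $\prod_Y dim_C(\phi(Y))^{\chi(Y)}$ in $\sigma_\phi$ once one checks that each region $Y$ dual to an interior $2$-simplex is a disk, so $\chi(Y) = 1$. The heart of the matter is the local identity: for each $4$-simplex $\Delta$ of $K$, the local contribution of the dual tetrahedral point $x_\Delta$ to the shadow sum — which by Definition \ref{def/shadow-state-sum} is a normalized $6$j-symbol $\normalizedSixJSymbol{\phi_{01}}{\phi_{02}}{\phi_{30}}{\phi_{32}}{\phi_{13}}{\phi_{21}}$ together with the half-integer gleam contributions $\nu'_\phi(Y)^{2gl(Y)}$ of the surrounding regions — must reassemble, after contracting along the edges of $P_K$ near $x_\Delta$, into exactly the $10$j-symbol $10j(\Delta,\beta|_\Delta)$ (or its mirror in the decoherent case). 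Unwinding Definition \ref{def/10j-symbol}, the $10$j-symbol is itself built from $6$j-symbols and braidings, and the identity reduces, via the Racah identity \eqref{eqn/Racah-identity} and the Biedenharn–Elliott identity \eqref{eqn/Biedenharn-Elliott-identity} of Proposition \ref{prop/basic-equalities-of-6j-symbols}, to a diagram chase on $S^2$ involving the graphs of Definition \ref{def/10j-symbol}. The twisting coefficients $\nu'$ appearing in the $\sigma_1, \sigma_2$ of Definition \ref{def/canonical-isomorphism}, together with the gleam exponents $2gl(Y)$ (which, for a closed $X$ built from $K$, encode the self-intersection numbers and hence the framing data), are exactly what is needed to convert the braidings in the $10$j-symbol into the untwisted $6$j-data of the shadow formula.

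Finally one matches the global normalizations. The CY prefactor is $D^{2(n_0 - n_1)}$ where $n_0, n_1$ count vertices and edges of $K$; the shadow prefactor is $D^{-b_2(P_K) - null(P_K)}$ together with the $\prod_e dim'_C(\partial\phi(e))^{-1}$ and the circle-$1$-strata factor in $\sigma_\phi$. Here I would use that $P_K$ has empty boundary (since $X$ is closed, we built $P_K$ as a skeleton without boundary), so $\partial P_K = \varnothing$, the $\prod_e$ and $\prod_g$ factors in $\sigma_\phi$ are empty, and $H(\partial P_K) = \mathbb{k}$ — consistent with $\int^{sh}_X C$ being a number. The remaining identity $D^{2(n_0-n_1)} = D^{-b_2(P_K) - null(P_K)}$ (after all the $dim$-weights have been reconciled with the orthonormality relation \eqref{eqn/orthonormality-relation}, which is what powers the edge-contractions and produces the compensating powers of $D$ and $dim(k)$) is a purely combinatorial/homological bookkeeping statement: $b_2(P_K)$ is computed from the cellular chain complex of $P_K$, which is the dual of the simplicial chain complex of $K$, and $null(P_K) = rank\,Ann(Q_{P_K})$ with $Q_{P_K}$ the intersection form of $X$ by the remark following Definition \ref{def/nullity-of-a-shadowed-2-polyhedron}; an Euler-characteristic count $\chi(X) = n_0 - n_1 + n_2 - n_3 + n_4$ together with Poincaré duality closes it.

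I expect the main obstacle to be precisely the local identity in the second paragraph: showing that the dual-tetrahedron $6$j-symbol plus gleam weights equal the $10$j-symbol of the corresponding $4$-simplex. This is where the "semi-locality" alluded to in Section \ref{subsection/a-summary-to-experts} is genuinely delicate — the author explicitly notes the semi-local statement could not be proved in general, only a workaround for closed $4$-manifolds. So rather than a clean local-to-global factorization, I anticipate the actual proof must contract the full shadow network globally and recognize the result as the full CY contraction all at once, using the Biedenharn–Elliott and Racah identities to migrate between the two normalizations of the local pieces; the closedness of $X$ is what makes this global contraction collapse to a scalar and sidesteps the missing semi-local lemma.
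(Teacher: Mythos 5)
There is a genuine gap in the opening step, and it cascades. You propose to take $P_K$ to be ``the dual $2$-skeleton'' of $K$, with one tetrahedral point per $4$-simplex. But the dual $2$-skeleton of a triangulated $4$-manifold is \emph{not} a simple $2$-polyhedron: the vertex dual to a $4$-simplex has a neighborhood which is the cone on the link of that vertex, and that link is the $1$-skeleton of $\partial\Delta^4$, i.e.\ the complete graph $K_5$ (5 dual edges from the 5 tetrahedral faces, 10 arcs from the 10 triangular faces). Definition \ref{def/local-shape} only allows cones on $K_4$ (the boundary of a tetrahedron), so this is not a tetrahedral point and your $P_K$ is not admissible input for Definition \ref{def/shadow-state-sum}. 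The paper's construction avoids this precisely by going through the handle decomposition (Definition \ref{def/shadow-of-a-4-manifold-from-a-handle-decomposition}) and an explicit local skeleton (a $5$-punctured $2$-sphere inside each $4$-simplex), which \emph{resolves} the $K_5$ cone into five tetrahedral points per $4$-simplex, each carrying its own normalized $6$j-symbol.

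This kills your ``heart of the matter'' local identity as stated: a single normalized $6$j-symbol lives in a tensor product of four $H(\cdot,\cdot,\cdot)$ modules and involves six labels, while a $10$j-symbol is a functional on five $\operatorname{Hom}(\mathbb{1}, V\otimes V\otimes V\otimes V)$ spaces with ten labels — there is no sense in which one $6$j plus gleams could ``reassemble'' into a $10$j. What the paper actually does is contract the \emph{five} $6$j-symbols (with gleam-induced twists) around one $4$-simplex, and even then the result is not the $10$j-symbol on the nose: it is the $10$j-symbol with six leftover edges. The workaround — and this is where closedness of $X$ is used — is that after gluing the local pieces across all $4$-simplices, these leftover edges close up into $\Omega$-colored unlinks contributing a factor of $D^2$ per component, which is then absorbed into the prefactor count. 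Your final paragraph correctly intuits that the proof cannot be a clean local factorization and that a global contraction and a workaround for closed manifolds is needed, but it does not identify what that workaround is, and the premise it rests on (one tetrahedral point per $4$-simplex, one $6$j $=$ one $10$j) is false. To repair the proposal you would need to (i) replace $P_K$ by the five-tetrahedral-point shadow from the handle decomposition, (ii) prove the $5\times6$j-to-$10$j contraction identity with the correct gleam twists, and (iii) handle the residual unlink factors and redo the $D$-power bookkeeping accordingly.
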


\begin{remark}
  The Witten-Reshetikhin-Turaev (quantum Chern-Simons) model is
  known to be the boundary theory of Crane-Yetter model
  \cite{barrett/observables-in-tv-and-cy} \cite{tham/phd-thesis}.
  It is also shown that the former is the boundary theory of the
  shadow TQFT \cite[X.3.2 \& theorem
  X.3.3]{turaev-qiok-3-manifolds}. Therefore, theorem
  \ref{theorem/main-theorem} provides another proof for the first
  fact.
\end{remark}

\begin{proof}
  Fix a small $\epsilon > 0$.

  We begin by computing the shadow state sum for $X$. First, fix
  a triangulation $T$ for $X$. Denote by $T_{i}$ to be the set of
  $i$-cells of $T$, and fix a total ordering on $T_{0}$. Recall
  that the ordering induces an orientation of each $4$-cell. If
  it agrees with the orientation from $X$, we call it an
  coherently oriented cell; otherwise a decoherently oriented
  cell. From $T$ we will construct a shadow of a similar
  ``shape''. Indeed, the dual of the any triangulation provides a
  handle decomposition $H$, in which each $k$-cell corresponds to
  an $(n-k)$-handle. The construction in
  (\ref{def/shadow-of-a-4-manifold-from-a-handle-decomposition})
  constructs a shadow for $X$, as follows.

  Take the union of the $0$-handles and the $1$-handles. Its
  boundary $Y$ is a connected sum of $|T_{0}|$ $3$-spheres. By
  (\ref{def/shadow-of-a-4-manifold-from-a-handle-decomposition}
  and \ref{def/shadow-cone-of-a-framed-link-in-a-3-manifold}), we
  need to pick a skeleton of $Y$. We will construct a very
  concrete one as follows. Within each $4$-simplex $\Delta$,
  $Y \cap \Delta$ is $S^{3} \setminus (5 \times B^{3})$. Think of
  this as
  $\mathbb{R}^{3} \setminus \bigcup_{\vec{v}} B_{\epsilon}(\vec{v})$,
  where $B_{r}(x)$ denotes the ball of radius $r$ centered at
  $x$, and $\vec{v}$ runs through the set
  $\{(1,0,0), (0,1,0), (\text{-}1,0,0), (0,\text{-}1,0)\}$.
  Denote $S^{1}(\vec{v})$ to be the equator dual to $\vec{v}$ of
  each $2$-sphere
  $S^{2}(\vec{v}) := \partial B_{\epsilon}(\vec{v})$. The largest
  component of
  $B_{0}(1) \setminus \left(\bigcup_{\vec{v}}S^{1}(\vec{v})\right)$
  is a $4$-punctured $2$-sphere. Finally, remove an
  $\epsilon$-disk centered at $(0,0,1)$ from it, and let the
  boundary straightly stretch to $(0,0,+\infty)$; this is a
  $5$-punctured $2$-sphere $\Sigma$, which is a local skeleton
  for $Y$.

  To continue following
  (\ref{def/shadow-of-a-4-manifold-from-a-handle-decomposition}),
  we need to project to the links (the gluing data of the
  $2$-handles in $Y$) to $\Sigma$. It is a geometric exercise to
  construct a projection so that the projected diagrams look as
  follows

  \begin{center}
    \includegraphics[height=6cm]{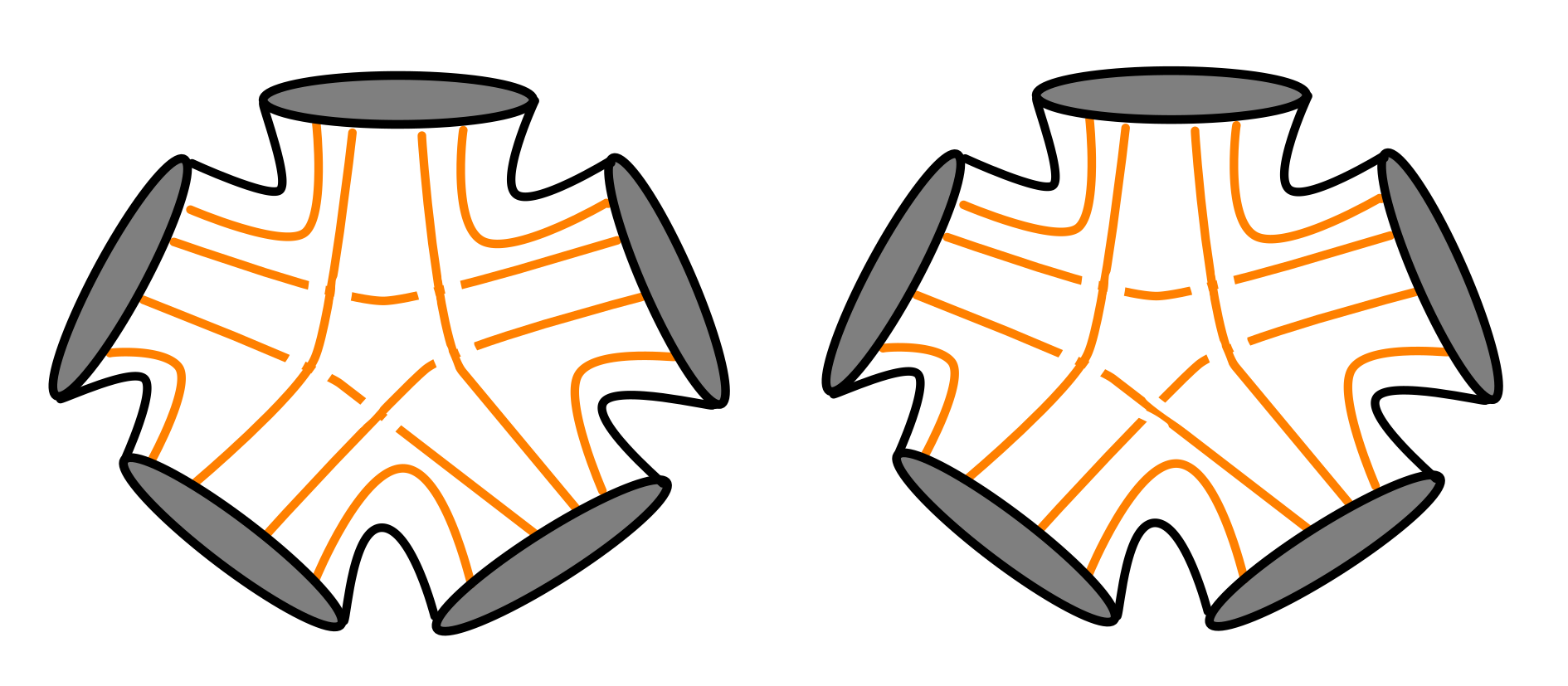}
  \end{center}

  \noindent depending on whether the $4$-cell is coherently
  oriented or not. The construction then cones the projected
  links, and assigns gleams around each intersection of links on
  $\Sigma$. This encodes a local piece of the complete shadow,
  which is a gleamed $2$-polyhedron $P$ without boundary.
  However, $P$ also looks the same locally within each
  $4$-simplex (up to mirror), so for simplicity we will keep
  working locally.

  The shadow state sum, locally, is represented by the following
  diagram and its mirrored image.

  \begin{center}
    \includegraphics[height=11cm]{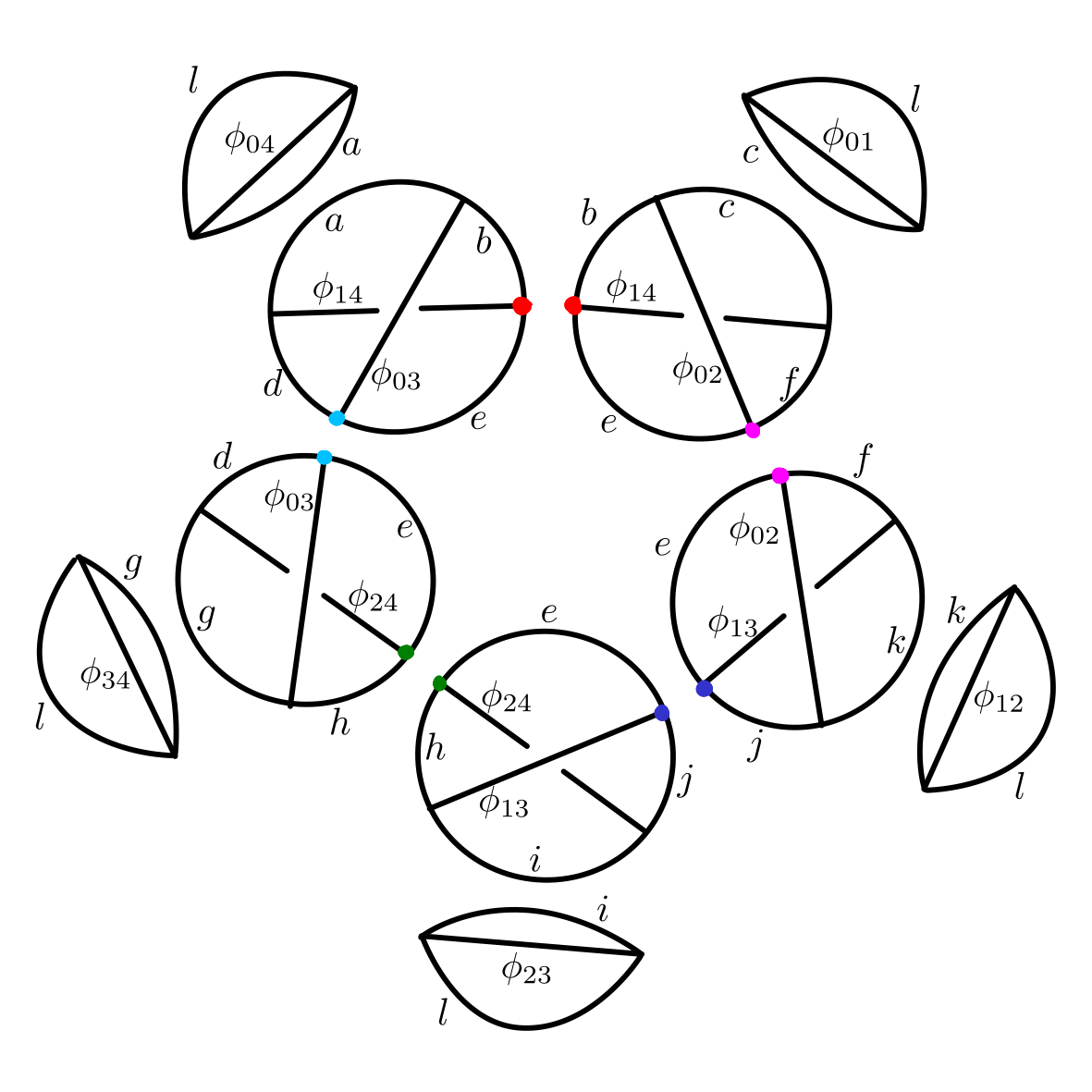}
  \end{center}

  \noindent where each of the $5$ tetrahedral graphs is obtained
  by the $6$j-symbol twisted with the $4$ gleams around the
  corresponding tetrahedral point. The vertices in the graph are
  paired (indicated by the colors), and paired vertices are
  actually labeled by elements of the bases and dual bases of the
  morphism spaces. We contract them and obtain the following
  diagram using the techniques given in
  \cite[Lemma 1.1, 1.3]{kirillov-balsam/turaev-viro-I}.

  \begin{center}
    \includegraphics[height=11cm]{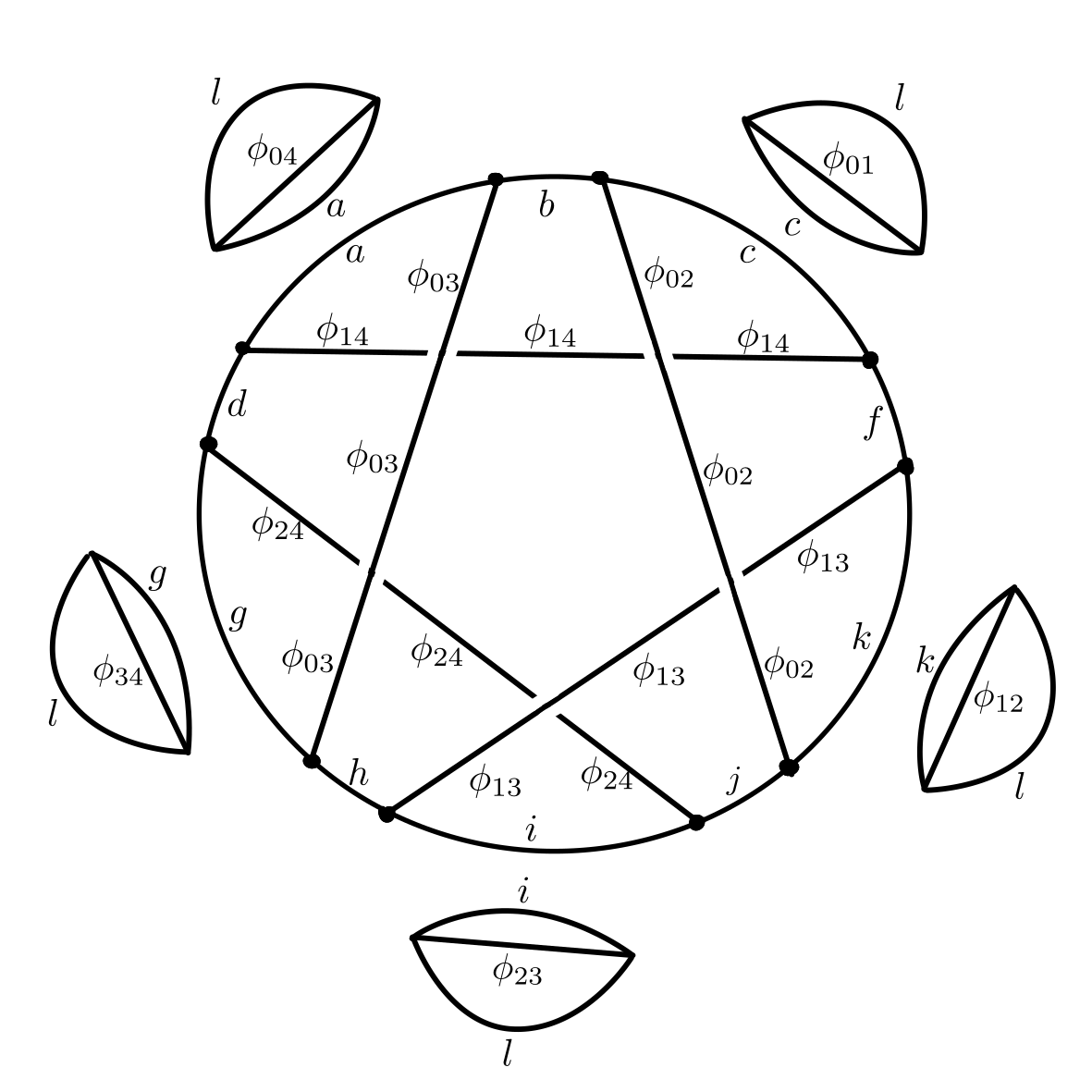}
  \end{center}

  \noindent 
  We can further contract each theta graph to the central
  component using the following procedure.

  \begin{center}
    \includegraphics[height=18cm]{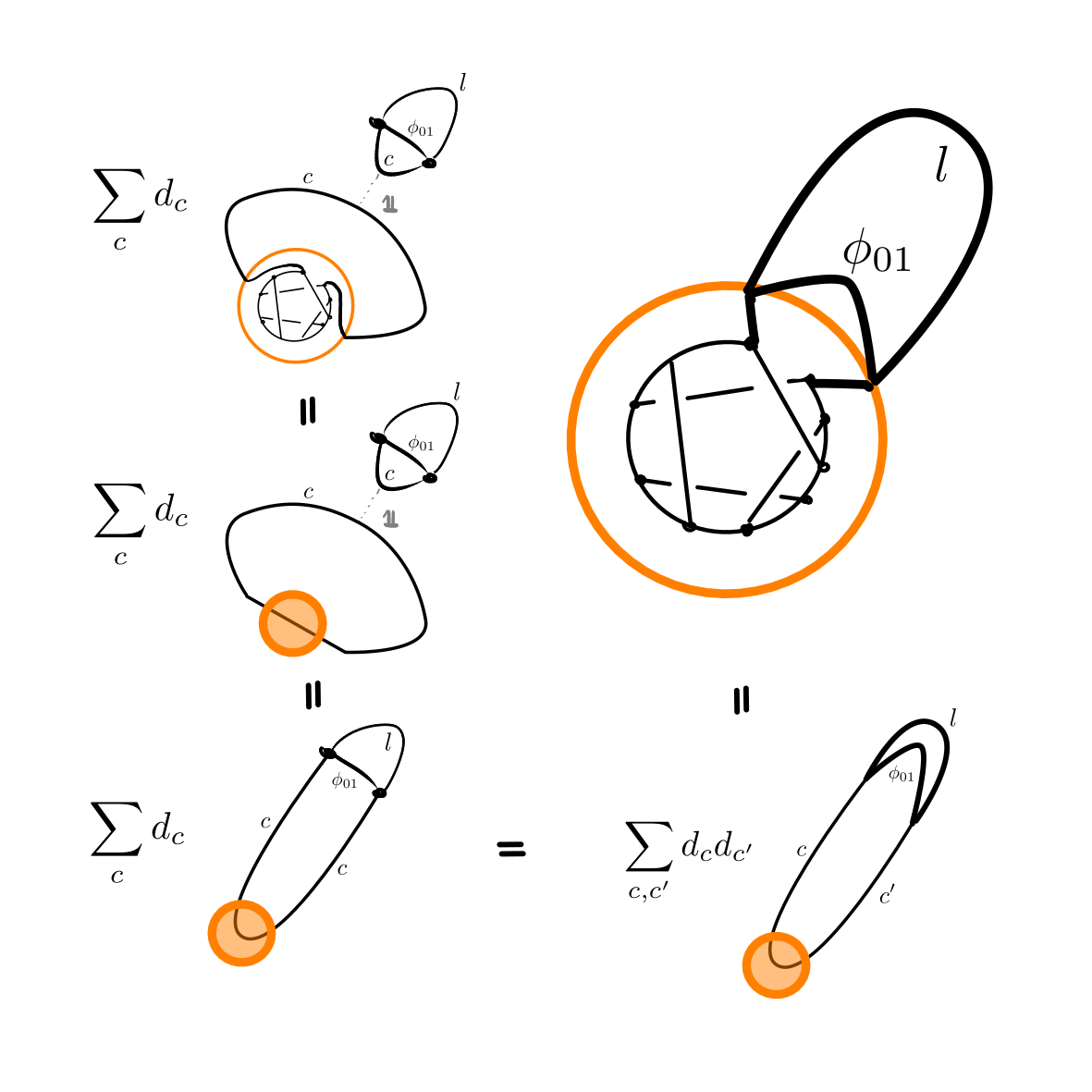}
  \end{center}
  \noindent Repeat for five times, and the result is the
  following

  \begin{center}
    \includegraphics[height=11cm]{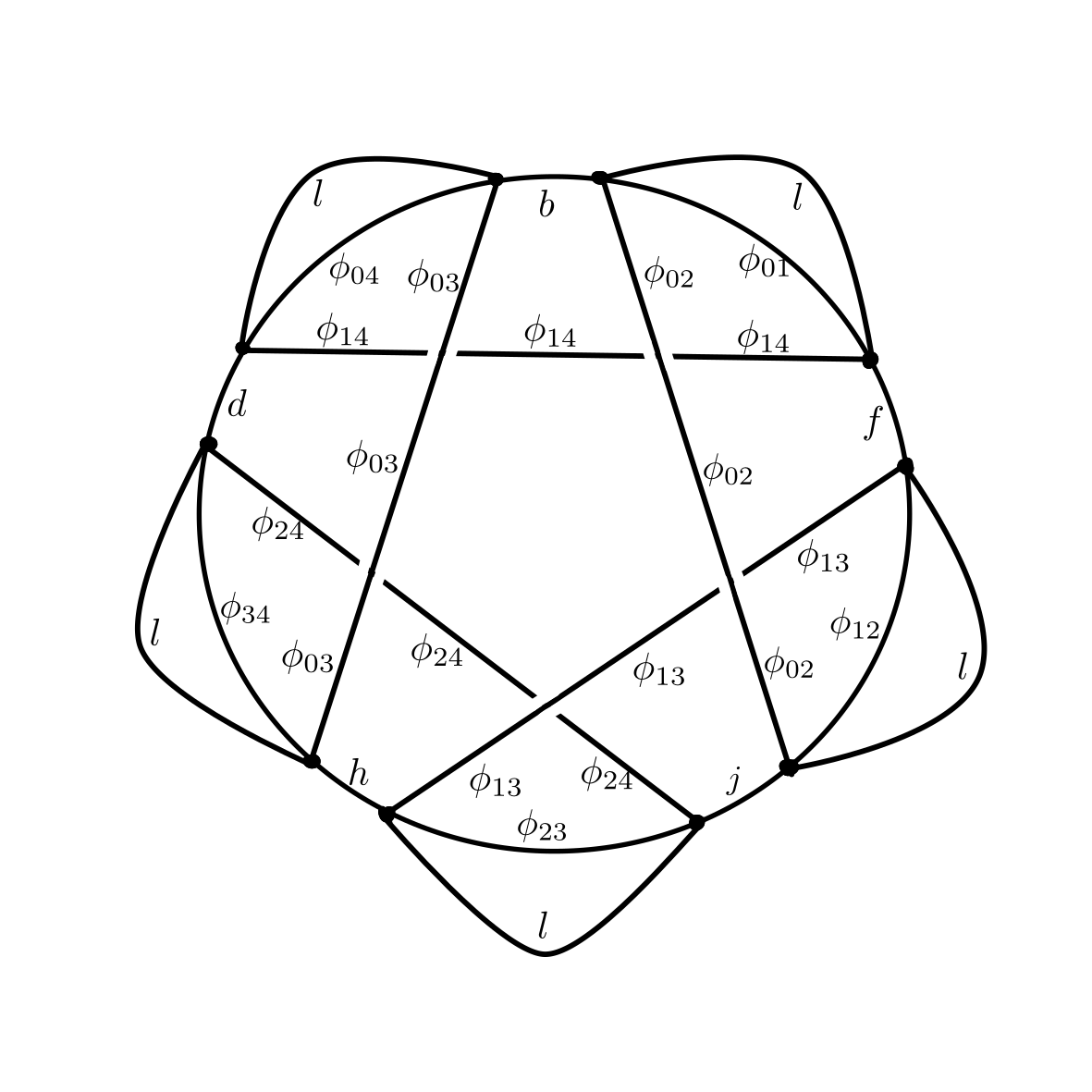}
  \end{center}

  \noindent This is almost the $10$j-symbol involved in the
  definition of the $\int_{X}^{CY} C$, except the extra edges
  labeled by $b, d, f, h ,j, l$. However, after contracting the
  local pieces together, the extra edges form unlinks (colored by
  the regular coloring $\Omega = \sum_{i\in I}dim(i)i$) and
  therefore can be viewed as a factor $D^{2}$ and removed from
  the diagram. The rest of the proof is by counting.
\end{proof}

\printbibliography
\end{document}